\newtheorem{dfn}{Definition}
\newtheorem{thm}{Theorem}
\newtheorem*{thm*}{Theorem}
\newtheorem{lm}[thm]{Lemma}
\DeclareMathOperator{\dist}{d}
\newcommand{\R}{\mathbb{R}}
\newcommand{\N}{\mathbb{N}}
\newcommand{\C}{\mathbb{C}}
\DeclareMathOperator{\tr}{tr}
\DeclareMathOperator{\supp}{supp}
\newcommand{\ket}[1]{\left \vert #1 \right \rangle}
\newcommand{\bra}[1]{\left \langle #1 \right \vert}
\newcommand{\lat}{\mathcal{L}}
\newcommand{\vol}{V}
\renewcommand{\L}{L}
\newcommand{\dlat}{d_\mathcal{L}}
\newcommand{\Cclust}{C_{\mathrm{Clust}}}
\newcommand{\Chat}{\hat{P}}
\newcommand{\Chom}{C_{\mathrm{Trans}}}
\newcommand{\exphom}{\alpha_{\mathrm{Trans}}}
\newcommand{\trec}{t_{\mathrm{Rec}}}
\newcommand{\trelax}{t_{\mathrm{Relax}}}
\newcommand{\cone}{\mathcal{C}}
\newcommand{\id}{\mathbbm{1}}
\newcommand{\expv}[1]{\langle#1\rangle_{\rho}}
\newcommand{\expvG}[1]{\langle#1\rangle_{\rho_G}}
\newcommand{\expvS}[1]{\langle#1\rangle_{\sigma}}
\newcommand{\iu}{\mathrm{i}}
\newcommand{\e}{\mathrm{e}}
\DeclareMathOperator{\Pf}{Pf}
\DeclareMathOperator{\sign}{sign}
\DeclareMathOperator*{\argmax}{arg\,max}
\newcommand{\cstr}[2]{\mathcal{K}^{#1}_{#2}}		% contraint for the indices {set they live in } {partition}
\newcommand{\partitions}{\mathcal{P}}		% set of all partitions [base-set}
\newcommand{\ball}{\mathcal{B}}	
\newcommand{\patch}[1]{\Chat^{(k_j)_{j\in #1}}_{#1}}	% patch {set of j-indices contributing to that patch}
\newcommand{\Expv}[1]{\left\langle#1\right\rangle_{\rho}}	% large langle rangle with rho
\newcommand{\subSysInds}{\tilde{S}}			% Given the subsystem in [V] this is the corresponding set of indices in [2V]
\definecolor{jens}{rgb}{.2,0.7,.9}
\definecolor{mathis}{rgb}{.9,.0,.9}
\definecolor{marek}{rgb}{.5,.5,.1}
\definecolor{gogolinblue}{rgb}{0.23,0.4,0.7}
\begin{document}
\title{Equilibration via Gaussification in fermionic lattice systems}
\author{M.\ Gluza}
\affiliation{Dahlem Center for Complex Quantum Systems, Freie Universit{\"a}t Berlin, 14195 Berlin, Germany}
\author{C.\ Krumnow}
\affiliation{Dahlem Center for Complex Quantum Systems, Freie Universit{\"a}t Berlin, 14195 Berlin, Germany}
\author{M.\ Friesdorf}
\affiliation{Dahlem Center for Complex Quantum Systems, Freie Universit{\"a}t Berlin, 14195 Berlin, Germany}
\author{C.\ Gogolin}
\affiliation{ICFO-Institut de Ciencies Fotoniques, Mediterranean Technology Park, 08860 Castelldefels (Barcelona), Spain}
\affiliation{Max-Planck-Institut f{\"u}r Quantenoptik, Hans-Kopfermann-Stra{\ss}e 1, 85748 Garching, Germany}
\author{J.\ Eisert}
\affiliation{Dahlem Center for Complex Quantum Systems, Freie Universit{\"a}t Berlin, 14195 Berlin, Germany}
\date{\today}
\begin{abstract}
In this work, we present a result on the non-equilibrium dynamics causing equilibration and Gaussification of quadratic non-interacting fermionic Hamiltonians.
Specifically, based on two basic assumptions --- clustering of correlations in the initial state and the Hamiltonian exhibiting delocalizing transport --- we prove that non-Gaussian initial states become locally indistinguishable from fermionic Gaussian states after a short and well controlled time.
This relaxation dynamics is governed by a power-law independent of the system size.
Our argument is general enough to allow for pure and mixed initial states, including thermal and ground states of interacting Hamiltonians on and large classes of lattices as well as certain spin systems.
The argument gives rise to rigorously proven instances of a convergence to a generalized Gibbs ensemble.
Our results allow to develop an intuition of equilibration that is expected to be more generally valid and relates to current experiments of cold atoms in optical lattices.
\end{abstract}

\maketitle

Despite the great complexity of quantum many-body systems out-of-equilibrium, local expectation values
in such systems show the remarkable tendency to equilibrate to stationary values that do not depend on the microscopic details of the initial state,
but rather can be described with few parameters using thermal states or generalized Gibbs ensembles \cite{PolkovnikovReview,1408.5148,1503.07538}.
Such behavior has been successfully studied in many settings theoretically and experimentally, most notably in instances of quantum simulations in 
optical lattices \cite{BlochSimulator,1408.5148,nature_bloch_eisert}.

By now, it is clear that, despite the unitary nature of quantum mechanical evolution, local expectation
values equilibrate due to a dephasing between the eigenstates
\cite{CramerEisert,Linden_etal09,1110.5759,RigolFirst,ReimannKastner12,PhysRevE.90.012121,1503.07538}.
So far it is, however, unclear why this dephasing tends to happen so rapidly.
In fact, experiments often observe equilibration after very short times which are independent of the system size \cite{nature_bloch_eisert,Koehl3}, while even the best theoretical bounds for general initial states of concrete systems diverge exponentially \cite{PhysRevE.90.012121,1408.5148}.
This discrepancy poses the challenge of precisely identifying the equilibration time, which constitutes one of the main open questions in the field \cite{1408.5148,1503.07538,PolkovnikovReview}.

What is more, only little is known about how exactly the equilibrium expectation values emerge.
Due to the exponentially many constants of motion present in quantum many-body systems, corresponding to the
overlaps with the eigenvectors of the system, there seems to be no obvious reason why equilibrium values often
only depend on few macroscopic properties such as temperature or particle number.
In short: It is unclear how precisely the memory of the initial conditions is lost during time evolution.

%\begin{figure}[t] 
%  \includegraphics[width=0.75\columnwidth]{visuallyarrestingplot.pdf}
%  \caption[visuallyarresting]{Illustration of the Gaussification process
%Expanding a time evolving local operator in the basis of creation and
%annihilation operators allows us to separate Gaussian (two body, dark
%blue) and non-Gaussian (multi-body, bright red) contributions to its
%expectation value in states with exponential clustering of
%correlations
%Delocalizing transport leads to an algebraic suppression
%of the non-Gaussian contributions in time.}
%  \label{fig:visuallyarresting}
%\end{figure}

To make progress towards a solution of these two problems, 
it is instructive to study the behavior of non-interacting particles captured by so-called quadratic or free models.
In these models the time evolution of so called Gaussian states, which are fully described by their correlation
matrix, is particularly simple to describe.
While studying the time evolution of such states provides valuable insight into the spreading of particles and equilibration, 
it is unclear if and under which conditions general non-Gaussian initial states out of equilibrium end up appearing Gaussian.

In this work, we address this question: We show under which conditions very general non-Gaussian initial states become locally indistinguishable from Gaussian states with the same second-moments.
This mechanism is much reminiscent of actual thermalization, in that an initially complex setting appears to converge to a high-entropy state that is defined by astoundingly few parameters only.
In this way, we  present a significant step forward in the theory of equilibration of quantum many-body systems that have been pushed out of equilibrium.
Furthermore, our work suggests that for quadratic models, Gaussification can be seen as a genuine mechanism of non-equilibrium dynamics, complementing and playing a significant role in equilibration.

Our results hold for a remarkably large class of initial states, including ground states of interacting 
models evolving, after a so-called \emph{quench}, in time under a quadratic fermionic Hamiltonian with finite ranged interactions.
This family of Hamiltonians notably includes the case of non-interacting ultra-cold fermions realizable in optical lattices.
By virtue of the Jordan-Wigner transformation our results also apply to certain spin models.
We formulate our results in the form of a rigorously proven theorem, which at the same time provides an intuitive explanation of the physics behind our result.
In particular, we find Gaussification to be a consequence of two natural assumptions,
namely exponential clustering of correlations in the initial state and what we call delocalizing transport.

\paragraph*{Setting.} We begin by precisely stating the physical setting that we consider.
Let $\lat$ be a $\dlat$-dimensional cubic lattice with $\vol$ lattice sites.
Each site $r\in\lat$ is associated with a fermionic orbital with fermionic creation and annihilation operators $f_r^\dagger$ and $f_r$.
We collect them in a vector $c=(f_1,f_1^\dagger,\dots, f_\vol, f_\vol^\dagger)$.
We restrict to spin-less fermions on cubic lattices purely for notational convenience, and all results can be generalized to fermions with internal degrees of freedom on Kagom\'e, honeycomb, or other geometries.
The Hamiltonian of a quadratic fermionic system is then of the form
\begin{equation}\label{QuadraticForm}
  H = \sum_{j,k=1}^{2 \vol} c_j^\dagger\, h_{j,k}\, c_k,
\end{equation}
where $h = h^\dagger$ collects the couplings.
The time evolution of annihilation operators in the Heisenberg picture under such a Hamiltonian is given by 
\begin{equation}
  \label{eq:time_evolution}
  c_{j}(t) \coloneqq \e^{\iu Ht}\, c_j\, \e^{-\iu Ht} = \sum_{k=1}^{2 \vol} W_{j,k}(t)\, c_k
\end{equation}
with the propagator $W(t) \coloneqq \e^{-2\iu t h}$.

Next we introduce the concept of Gaussian states and Gaussification.
Define the correlation matrix $\gamma$ of a state $\rho$ as the matrix of its second moments, i.e., $\gamma_{j,k} \coloneqq \tr(\rho\,c_j^\dagger\,c_k)$.
A convenient characterization of Gaussian states is the following:
They are the states that maximize the von Neumann entropy given the expectation values collected in the correlation matrix.
For every state $\rho$, we hence define its Gaussified version $\rho_G$ as the Gaussian state with the correlation matrix of $\rho$, i.e., 
$\tr(\rho_G\,c_j^\dagger\,c_k) = \tr(\rho\,c_j^\dagger\,c_k)$.

\paragraph*{Assumptions.}
Our main theorem holds for initial states (including non-Gaussian ones) with a form of decay of correlations that evolve under quadratic Hamiltonians that exhibit a form of transport that we define below.
We now make these two conditions precise, starting with the correlation decay:
\begin{dfn}[Exponential clustering of correlations]
  \label{def:clustering}
  We say that a state $\rho$ exhibits exponential clustering of correlations with length scale $\xi>0$ 
  and constant $\Cclust>0$ if, for any two operators $A,B$ with $\|A\| = \|B\| = 1$, we have
  \begin{equation}
    \begin{split}
      &|\tr(\rho\,A\,B) - \tr(A\,\rho)\, \tr(B\,\rho)| \\
      \leq{} &\Cclust\, |\supp(A)|\,|\supp(B)|\, \e^{- \dist(A,B) / \xi} .
    \end{split}
  \end{equation}
\end{dfn}
Here $\dist(A,B)$ is taken to be the natural distance on the lattice between the supports $\supp(A),\supp(B)$ of $A$ and $B$ and $\|\cdot\|$ denotes the operator norm.

Ground states of interacting gapped local Hamiltonians \cite{math-ph/0507008,Nachtergaele2013} as well as thermal states of arbitrary non-quadratic fermionic systems \cite{Kliesch2014} at sufficiently high temperature have exponential clustering of correlations as defined in Definition~\ref{def:clustering}.
Thus the initial state could be prepared within a quench scenario where the Hamiltonian is changed from a gapped interacting model to a quadratic Hamiltonian which governs the non-equilibrium dynamics.
To reemphasize, by no means is the initial state assumed to be in any specific relation to properties of the latter quadratic Hamiltonian.

For our proof of local relaxation towards a Gaussian state we further assume that the quadratic Hamiltonian exhibits transport in the following sense:
\begin{dfn}[Delocalizing transport]
  \label{def:transport}
  A quadratic Hamiltonian with propagator $W$ on a $\dlat$-dimensional cubic lattice of volume $\vol$ 
  exhibits delocalizing transport 
  with constants $\Chom > 0$, $\exphom>\dlat/4$ and recurrence time $\trec >0$ if, 
  for all $t \in (0,\trec]$, we have that 
  \begin{align}\label{eq:DefTransport}
    \forall j,k: \quad |W_{j,k} (t)| \leq \Chom \max \{t^{-\exphom},\vol^{-\exphom}\} .
  \end{align}
\end{dfn}
The intuition behind this definition is that an initially localized fermionic operator will spread over
a large area, such that its component on a single localized operator is dynamically suppressed.
Such a suppression can be shown to hold for important classes of models (see the appendix for further details). In particular delocalizing transport with $\exphom = \dlat/3$ can be proven for quadratic hopping Hamiltonians with constant on-site potential (see also Fig.~\ref{fig:cone}) and the critical Ising model.
The recurrence time takes into account that any non-trivial bound of the form \eqref{eq:DefTransport} is eventually violated due to the recurrent nature of the dynamics of finite dimensional quantum systems.
For quadratic, free hopping Hamiltonians, it can be shown that the recurrence time grows at least like $\vol^{6/7\dlat}$ with the system size, 
but it is expected to be usually exponentially large.

\begin{figure}
  \includegraphics[width=\columnwidth]{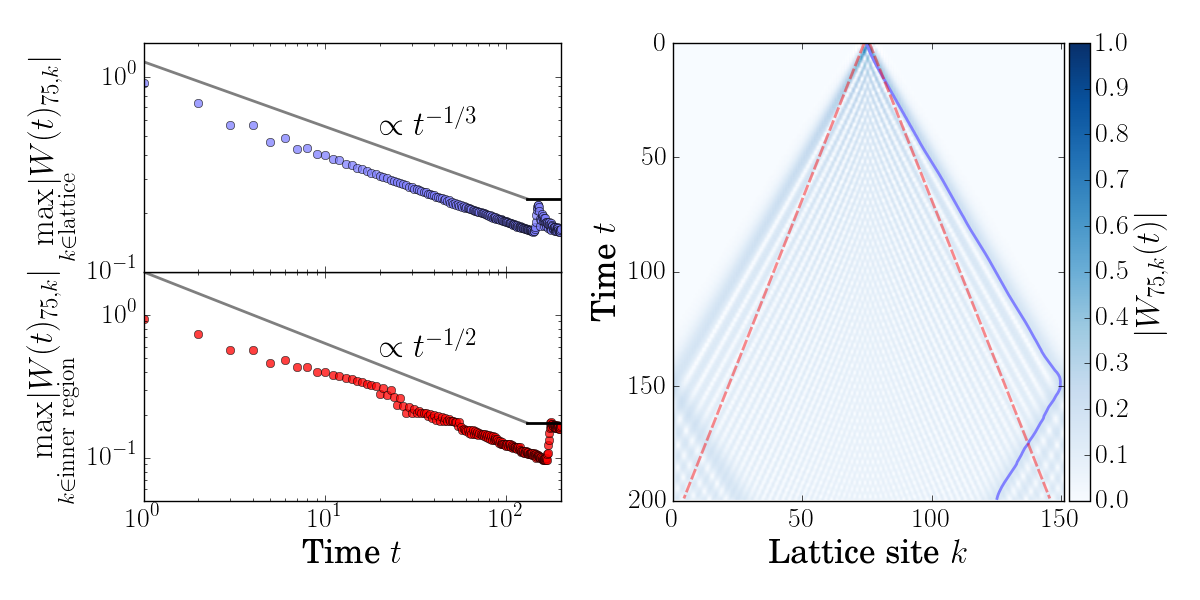}
  \caption{(color online)
    The right panel shows a numerical study of the spreading of the support of a single fermionic annihilation operator described in Eq.~\eqref{eq:time_evolution} 
    under the evolution of the quadratic hopping Hamiltonian $H = - \sum_j (f_j^\dagger f_{j+1} + f_{j+1}^\dagger f_{j})$
    on a one-dimensional chain of $150$ sites with periodic boundary conditions.
    The support expands ballistically, creating the Lieb-Robinson cone.
    The left panel shows the suppression of different elements of the propagator in time.
    The plot at top shows the evolution of the maximimum taken over the full lattice $\max_k |W_{75,k}(t)|$ where in the lower plot the maximum inside the inner region of the Lieb-Robinson cone (between the red dashed lines) is plotted. 
    The maximum taken over the full lattice is reached for $k$ in the wavefront (indicated by the blue smoothed curve in the right panel) and the suppression goes as $t^{-1/3}$, while in the bulk of the cone the suppression is proportional to $t^{-1/2}$.
    The suppression stabilizes, once the wavefronts collide.}
  \label{fig:cone}
\end{figure}

\paragraph*{Main result.}
With the above definitions our main result can be stated as follows:
\begin{thm}[Gaussification in finite time]
  \label{thm:gauss}
  Consider a family of systems on cubic lattices of increasing volume $\vol$.
  Let the initial states exhibit exponential clustering of correlations and let the Hamiltonians be quadratic finite range and have delocalizing transport with the corresponding constants $\xi,\Cclust,\Chom, \exphom$ independent of $\vol$.
  Then for any local operator $A$ on a fixed finite region and any $0 < \nu < 4\exphom-\dlat$ there is a $\vol$ independent constant $C_{\mathrm{Total}}$ such that for any $t\leq\min(\trec,V)$
  \begin{equation}
    |\tr[A(t)\,\rho] - \tr[A(t)\,\rho_G]| \leq C_{\mathrm{Total}} \, t^{-4\exphom +\dlat + \nu} .
  \end{equation}  
  Consequently, if the recurrence time $\trec$ increases unbounded as some function of $\vol$, then, given an error $\epsilon>0$ 
  there exists a relaxation time $\trelax > 0$ independent of the system size
  such that for all times $t \in [\trelax,\,\trec]$ it holds 
  that $|\tr[A(t)\,\rho] - \tr[A(t)\,\rho_G]| \leq \epsilon$.
\end{thm}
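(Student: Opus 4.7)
My plan is to reduce the statement to controlling connected higher-order cumulants of the initial state and then to combine the clustering and transport assumptions to obtain the power-law bound. Since $A$ is supported on a fixed finite region, it is a polynomial of bounded degree in the $c_j$'s, so using $c_j(t) = \sum_k W_{j,k}(t)\,c_k$ I expand each monomial of $A(t)$ as $\sum_{k_1,\dots,k_{2n}} W_{j_1,k_1}(t)\cdots W_{j_{2n},k_{2n}}(t)\,c_{k_1}\cdots c_{k_{2n}}$. Because $\rho$ and $\rho_G$ share their two-point functions, the Gaussian Wick expansion of $\tr(\rho_G\,c_{k_1}\cdots c_{k_{2n}})$ cancels all pair contractions exactly, and the difference $\tr(\rho\,c_{k_1}\cdots c_{k_{2n}}) - \tr(\rho_G\,c_{k_1}\cdots c_{k_{2n}})$ reduces to a sum of products of genuinely connected cumulants of order $\ge 4$; the $n=1$ (quadratic) contribution vanishes identically.

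The first nontrivial step is to bound these connected cumulants $\kappa_{2n}$ using the 2-point clustering of Definition~\ref{def:clustering}. I would proceed by induction on $n$: for any nontrivial bipartition of $\{k_1,\dots,k_{2n}\}$ into groups separated by distance $d$, clustering controls the associated factorization error by $\Cclust\,\e^{-d/\xi}$; after subtracting the lower-order cumulants (already controlled inductively) and choosing the bipartition maximizing the separation, one obtains $|\kappa_{2n}(k_1,\dots,k_{2n})|\le C_n\,\e^{-D/\xi}$, where $D$ measures the spread of the point set. The key consequence is that, for any fixed $k_1$, the sum $\sum_{k_2,\dots,k_{2n}}|\kappa_{2n}|$ is bounded by a $\vol$-independent constant, since $\kappa_{2n}$ is effectively concentrated within a region of radius $\mathcal{O}(\xi)$ around $k_1$.

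The second step combines this with delocalizing transport. I apply the uniform bound $|W_{j_i,k_i}(t)|\le \Chom t^{-\exphom}$ to $2n-1$ of the propagator factors and leave a single factor $W_{j_1,k_1}(t)$ summed over $k_1$. Since the Hamiltonian is finite range, a standard Lieb-Robinson estimate localizes the effective support of $W_{j_1,\cdot}(t)$ to a ball of radius $\mathcal{O}(v t)$ up to exponentially small tails, so $\sum_{k_1}|W_{j_1,k_1}(t)|\lesssim (v t)^{\dlat}\,\Chom t^{-\exphom}$. Each $\kappa_{2n}$-contribution is then $\mathcal{O}(t^{-2n\exphom + \dlat})$, dominated at $n=2$ by $t^{-4\exphom + \dlat}$. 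Mixed terms in higher-moment expansions (e.g.\ $\kappa_4\,\kappa_2$ at order $2n=6$) factor through the same 4-point cumulant and scale identically, while $\kappa_{2n}$-pieces with $n\ge 3$ decay strictly faster under $\exphom>\dlat/4$; summing the finitely many monomials in $A$ yields the $\vol$-independent constant $C_{\mathrm{Total}}$, and the asymptotic statement on $\trelax$ follows by solving $C_{\mathrm{Total}}\,t^{-4\exphom+\dlat+\nu}\le\epsilon$ for $t$.

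The main obstacle I expect is the inductive cumulant estimate: clustering of correlations is a statement about pairs of operators, so extracting an exponential bound on the $2n$-point connected cumulant requires delicate combinatorial bookkeeping, subtracting the pair contractions in an order that allows the inductively-controlled lower cumulants to cancel rather than interfere with the exponential decay. A secondary subtlety is the parameter $\nu$: the clustering exponential must be truncated at a logarithmic length scale to control configurations where $D$ is small, which, combined with the Lieb-Robinson ball volume, produces a small polylogarithmic loss that $\nu>0$ absorbs to yield a clean polynomial bound uniformly valid on $t\in[0,\min(\trec,\vol)]$.
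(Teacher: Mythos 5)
Your route is genuinely different from the paper's. The paper never introduces cumulants: it truncates to the Lieb--Robinson cone, sorts index configurations by their $\Delta$-partition into spatial patches, factorizes expectation values across patches using clustering (paying an error $\e^{-\Delta/\xi}$ that forces $\Delta$ to grow as $t^{\nu/4\dlat}$, which is where the loss $\nu$ comes from), and then controls the interplay of size-$2$ and size-$\geq 4$ patches by an explicit recursion. Your cumulant (Ursell-function) expansion is the algebraically cleaner version of the same idea: the observation that only terms containing a connected cumulant of order $\geq 4$ survive the subtraction of $\rho_G$ is exactly the paper's observation that only partitions with a patch of size $\geq 4$ contribute, and your $(\Chom t^{-\exphom})^{2n-1}\cdot\sum_{k_1}|W_{j_1,k_1}|\lesssim (vt)^{\dlat}t^{-2n\exphom}$ estimate reproduces the paper's leading $t^{-4\exphom+\dlat}$ power. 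If the cumulant-decay lemma is established, your approach even avoids the paper's recursion, because the block sums in a cumulant expansion are over independent indices and hence factorize exactly, whereas the paper's patches carry mutual exclusion constraints that prevent factorization and necessitate the recursive bookkeeping of Lemma 10.

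There are two points where your proposal has real gaps. First, the treatment of the mixed terms is wrong as stated: for a partition block of size $2$ the quantity $\sum_{k,k'}|W_{j,k}(t)||W_{j',k'}(t)||\kappa_2(k,k')|$ is \emph{not} $O(1)$ under your method of taking absolute values and using the uniform bound --- it is of order $(vt)^{2\dlat}\Chom^2 t^{-2\exphom}$, which diverges whenever $\exphom<\dlat$ (the physically relevant regime, e.g.\ $\exphom=\dlat/3$). The pair factors must instead be resummed coherently, $\sum_{k,k'}W_{j,k}(t)W_{j',k'}(t)\kappa_2(k,k') = \tr[\rho\, c_j(t)c_{j'}(t)]$, and bounded by $1$ using unitarity of $W$ (the analogue of the paper's Lemma 7, $\|\sum_k W_{j,k}c_k\|\leq 1$). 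This is precisely the difficulty the paper flags as requiring its ``involved recursive and combinatorial argument''; in your language it is easy to fix, but the fix is not the one you describe. Second, the cumulant-decay lemma you defer is the technical heart and needs to be stated in the right form: exponential decay can only hold in the largest gap over bipartitions of the point configuration (equivalently the longest edge of a minimal spanning tree), not in the diameter --- a chain of $2n$ points each at distance $\xi$ from the next has $O(1)$ cumulant but large diameter --- and the inductive subtraction must track the fermionic signs of the signed partition expansion. With max-gap decay the summability $\sup_{k_1}\sum_{k_2,\dots,k_{2n}}|\kappa_{2n}|<\infty$ you need does follow, so the architecture survives, but as written the key estimate is asserted rather than proven.
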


The theorem states that for all times in the interval $[\trelax,\,\trec]$ the expectation value of any local observable in the time evolved state $\rho(t)$ will agree up to an error $\epsilon$ with the expectation value in the Gaussian state $\rho_G(t)$, which has the same second moments as $\rho(t)$.
With this we find that the expectation values of all local observables can be approximated by a decomposition according to Wick's theorem and that it will be impossible to distinguish the true state $\rho$ from the fermionic Gaussian state $\rho_G$ by any local measurement on a fixed finite local region $S$.
Note that since $\trelax$ is independent of the system size, but $\trec$ increases with its volume,
for any arbitrarily small $\epsilon$ there always exists a system size such that $\trec > \trelax$ and 
the interval where the theorem applies grows as a function of the system size.

We compare our general, rigorous, analytical result with a numerical simulation in Fig.~\ref{fig:tns} that shows an experimentally detectable signature of the Gaussification of a density-density correlator.
The comparison reveals that our bound correctly reproduces the physical behavior in the sense that the true Gaussification dynamics, in the relevant situation considered, follows a power-law.
What is not correctly reproduced is the exponent of that power-law, but we understand where the discrepancy between the observed $t^{-1}$ decay and the $t^{-1/3 + \nu}$ (for arbitrarily small $\nu$) behavior of our bound with $\exphom =1/3$ originates from:
The reason is that the provable decay with $\exphom=1/3$ for the considered model roots in the slow decay of the matrix elements of the propagator at the wavefront of the Lieb-Robinson cone. 
The elements in the bulk of the Lieb-Robinson cone can numerically be found to be suppressed as $t^{-1/2}$ leading to an effective $\exphom=1/2$ for the vast majority of matrix elements (see Figure~\ref{fig:cone}). Assuming this effective $\exphom = 1/2$ in Theorem~\ref{thm:gauss} leads to a suppression as $t^{-1}$.

\begin{figure}[t]
  \includegraphics[width=\columnwidth]{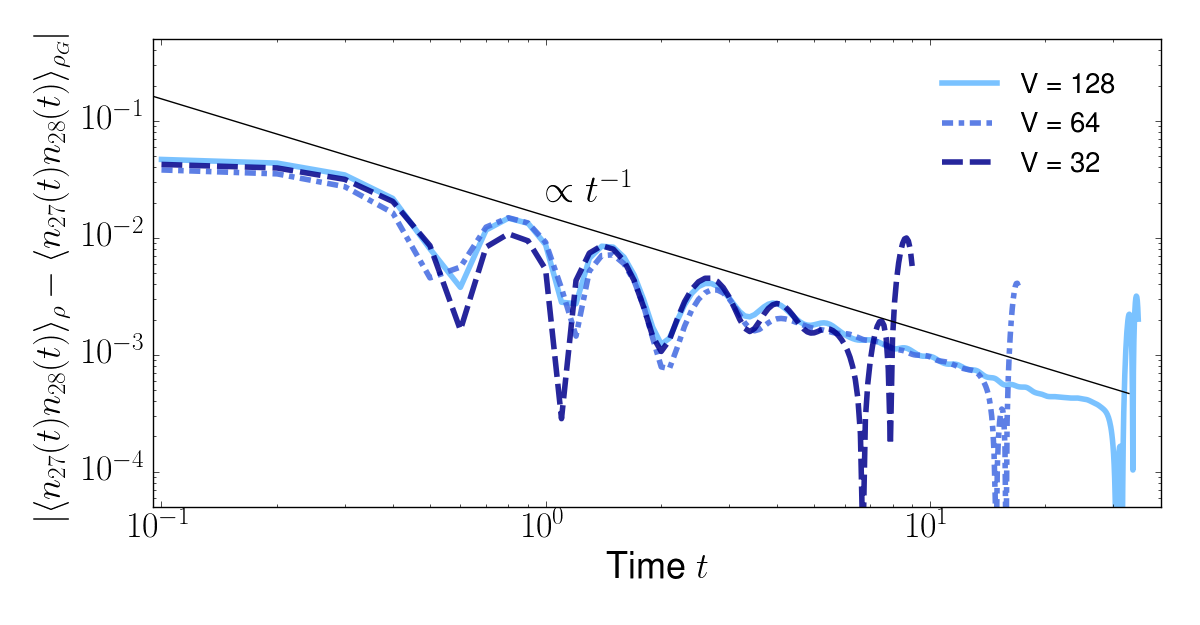}
  \caption{(color online)
    Numerical study of the evolution of the nearest-neighbor density-density correlator for system sizes $\vol=32,64,128$ under the quadratic nearest neighbor hopping Hamiltonian $H = - \sum_j^N (f_j^\dagger f_{j+1} + f_{j+1}^\dagger f_{j})$.
    The initial states are obtained by a ground state search with a matrix-product state algorithm in the interacting spinless Fermi-Hubbard model $H_{\mathrm{FH}} = H + U \sum_j^N {n_j} n_{j+1} + \sum_j \omega_j n_j$ with interaction strength $U=2$, and weak on-site disorder $w_j$ drawn independently from a Gaussian distribution with variance $1/4$. 
    To allow for comparisons of different system sizes, we have initially drawn 128 random numbers $w_j$ once and then used the first $\vol$ of them for different system sizes.
    All calculations were performed with periodic boundary conditions at half filling.
    The difference between the expectation values in the state $\rho$ and its Gaussified version $\rho_G$ of the density-density correlator between sites $27$ and $28$ as a function of time is suppressed approximately like $t^{-1}$, as indicated by the black line.
    At late times, due to the finite size of the system, recurrences occur, leading to an increase of the difference.
    Increasing the system size only shifts the recurrence time $\trec$ but leaves the decay behavior unchanged.
    The visible oscillations depend on details of the model and initial state.
    The time evolution was performed by explicitly calculating the evolution in the Heisenberg picture using Eq.~\eqref{eq:time_evolution}.
  }
  \label{fig:tns}
\end{figure}

The argument can be understood as a fermionic quantum central limit 
theorem \cite{Hudson} emerging from a dynamical evolution, in that the second moments are preserved and the higher 
cumulants can be proven to converge to zero in time.
The key steps in the proof, to be presented in the following, can be understood from intuitive physical considerations.
They are based on three main ingredients: finite speed of propagation in lattice systems, homogeneous suppression 
of matrix elements of the propagator due to delocalizing transport, and exponential clustering of correlations in the initial state.
The proof shares some intuition put forth on the equilibration of harmonic classical chains \cite{Spohn}.
For the full proof with all details of the involved combinatorics we refer the reader to the appendix.

\begin{proof}
  We expand the local operator $A$ supported in a fixed finite region $S$ in the basis of fermionic operators.
  To that end let $\subSysInds \coloneqq \{s_r\}$ for $r \in [2 |S|] \coloneqq \{1,\dots,2 |S|\}$ be the set of indices of all elements of the vector $c$ with support in $S$, then
  \begin{align}
    \label{eq:decomposeAMainText}
    A(t) = \quad \smashoperator{\sum_{b_1, \dots, b_{2 |S|}=0}^1}\quad  a_{b_1,\cdots,b_{2 |S|}} \,  c_{s_1}(t)^{b_1} \dots c_{s_{2 |S|}}{(t)}^{b_{2 |S|}} \; .
  \end{align} 
  Without loss of generality we can assume normalization $\|A\|=1$, such that all of the $2^{2 |S|}$ coefficients satisfy $|a_{b_1,\dots,b_{2m}}| \leq 1$.
Thus
  \begin{align}
    \label{eq:boundonlocalnongaussianity}
    &|\tr[A\,\rho(t)] - \tr[A\,\rho_G(t)]|\\\nonumber
    \leq{} & 2^{2|S|} \max_{J \subset \subSysInds} \biggl| \quad\smashoperator{\sum_{\substack{\phantom{X}\\(k_{j})_{j\in J}\in [2\vol]^{\times|J|}}}}\quad\tr\biggl[ \prod_{j \in J} W_{j,k_j}(t)\, c_{k_j} \, (\rho - \rho_G)\biggr]\biggr| \; .
  \end{align}
  Here and in the following all products are meant to be performed in increasing order.
  
  We assumed that the Hamiltonian has finite range interactions, i.e., there exists a fixed length scale $l_0$, such that whenever $\dist(j,k) > l_0$ it holds that $h_{j,k} = 0$, where we have used the shorthand $\dist(j,k) \coloneqq \dist(c_j,c_k)$.
  Such models satisfy Lieb-Robinson bounds \cite{liebrobinson}, which in our setting can be stated as follows:
  \begin{lm}[Lieb-Robinson bound for quadratic systems \cite{Hastings2004a}]\label{lieb-robinson-bound-lemma}
    For any quadratic fermionic Hamiltonians $H$ with finite range interactions there exist constants $C_{\mathrm{LR}},\mu,v>0$ independent of the system size such that its propagator $W$ fulfills the bound
    \begin{align}
      |W_{j,k}(t)| \leq C_{\mathrm{LR}}\, \e^{\,\mu\,( v |t| -  \dist(j,k))} \; .
    \end{align}
  \end{lm}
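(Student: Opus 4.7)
The plan is to reduce the claim to an elementary bound on matrix elements of the matrix exponential $W(t) = \e^{-2\iu t h}$, exploiting that for a quadratic Hamiltonian the single-particle dynamics is governed by a finite-range Hermitian matrix $h$ with $h_{j,k} = 0$ whenever $\dist(j,k) > l_0$. First I would expand in a Dyson (Taylor) series $W_{j,k}(t) = \sum_{n \geq 0} (-2\iu t)^n (h^n)_{j,k}/n!$ and analyse $(h^n)_{j,k}$ by writing it as a sum over length-$n$ walks $j = i_0 \to i_1 \to \cdots \to i_n = k$ on the lattice. Since each step must satisfy $\dist(i_r, i_{r+1}) \leq l_0$, one has $(h^n)_{j,k} = 0$ for all $n < \dist(j,k)/l_0$, and this alone already produces the cone structure of the bound.

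The next step is a quantitative bound on the surviving terms. Let $J \coloneqq \max_{j,k} |h_{j,k}|$, which is $\vol$-independent by the finite-range assumption, and let $K$ be an upper bound on the number of sites within distance $l_0$ of any fixed site (a function of $\dlat$ and $l_0$ only). Bounding each factor by $J$ and the number of walks of length $n$ starting at $j$ by $K^n$ gives $|(h^n)_{j,k}| \leq (JK)^n$. Setting $M \coloneqq 2 J K$ and $n_0 \coloneqq \lceil \dist(j,k)/l_0 \rceil$ one obtains
\begin{equation}
|W_{j,k}(t)| \leq \sum_{n \geq n_0} \frac{(M|t|)^n}{n!} \leq \e^{-\alpha n_0}\, \e^{\e^\alpha M |t|},
\end{equation}
where the second inequality uses the elementary tail estimate $\sum_{n \geq n_0} x^n/n! = \sum_{n \geq n_0} (\e^\alpha x)^n \e^{-\alpha n}/n! \leq \e^{-\alpha n_0}\, \e^{\e^\alpha x}$ valid for any $\alpha > 0$. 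Identifying $\mu \coloneqq \alpha/l_0$, $v \coloneqq \e^\alpha M l_0 / \alpha$, and $C_{\mathrm{LR}} \coloneqq 1$ then yields the stated bound.

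I do not expect any serious obstacle: the entire argument takes place at the level of a Hermitian matrix exponential, bypassing the commutator chain used for general local Hamiltonians. The only genuine care needed is in the path-counting step, where $K$ must cover all neighbours within the full finite range $l_0$ in $\dlat$ dimensions; the parameter $\alpha$ is then free and only parametrises a trade-off between the spatial decay rate $\mu$ and the group velocity $v$, without changing the exponential structure. As an alternative route one could specialise the general operator Lieb-Robinson bound of Ref.~\cite{Hastings2004a} to single-mode fermionic operators $c_j, c_k^\dagger$, but the direct Dyson-series argument sketched above is both shorter and gives tighter constants in the quadratic setting.
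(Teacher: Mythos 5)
Your proof is correct, and it is genuinely more self-contained than what the paper offers: the paper does not prove this lemma at all, but simply imports it by citation from Ref.~\cite{Hastings2004a}. Your Dyson-series argument --- vanishing of $(h^n)_{j,k}$ for $n< \dist(j,k)/l_0$ by path counting, the walk bound $|(h^n)_{j,k}|\leq (JK)^n$, and the tail estimate $\sum_{n\geq n_0}x^n/n!\leq \e^{-\alpha n_0}\e^{\e^\alpha x}$ --- is the standard and essentially optimal way to obtain a Lieb-Robinson bound for the one-particle propagator of a quadratic model, and all three steps check out, including the final identification $\mu=\alpha/l_0$, $v=\e^\alpha M l_0/\alpha$, $C_{\mathrm{LR}}=1$. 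Two small points of care: first, since the vector $c$ carries two entries per lattice site and $\dist$ is only a pseudometric on $[2\vol]$ (with $\dist(f_s,f_s^\dagger)=0$), the neighbour count $K$ must be taken over indices rather than sites, i.e.\ twice the number of sites within distance $l_0$ --- this changes only the constant. Second, the claim that $J=\max_{j,k}|h_{j,k}|$ is $\vol$-independent does not follow from finite range alone; it requires the couplings to be uniformly bounded across the family of systems, which is the implicit standing assumption behind the phrase ``constants independent of the system size'' in Theorem~1 and should be stated. What your route buys is explicit, tight constants and independence from the general interacting Lieb-Robinson machinery; what the citation route buys is brevity and the reassurance that the bound also covers settings where the single-particle reduction is less clean.
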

  The Lieb-Robinson bound tells us that $c_j(t)$ and $c_k(t)$ essentially still have disjoint support as long as $t$ is small enough such that $v |t| \ll\dist(j,k)$.
  We can hence restrict the sum in Eq.~\eqref{eq:boundonlocalnongaussianity} to those $k_{j}$ whose $\min_{s \in \subSysInds} \dist(k_j,s)$ is smaller than $(v+2\,v_\epsilon) |t|$ for some fixed $v_\epsilon >0$.
  The total contribution of the neglected terms can be bounded explicitly and, 
  importantly, is independent of $\vol$ and exponentially suppressed in $|v_\epsilon\,t|$.

  For each of the remaining summands in Eq.~\eqref{eq:boundonlocalnongaussianity} it is now important to keep track of the distribution of the indices $k_j$ inside the cone.
  For this purpose we define the $\Delta$-partition $P_\Delta$ of a subindex set $J \subset \subSysInds$ and sequence of indices $(k_j)_{j_\in J}$ as the unique decomposition of $J$ into subsets (patches) $p$ in the following way:
  The patches are constructed such that for any two subindices within any given patch $p$ there is a connecting chain of elements from that patch in the sense that the distance between two consecutive $c_{k_j}$ with $j \in p$ along that chain is not greater than $\Delta$ and the distance between any two $c_{k_j}, c_{k_j'}$ with $j,j'$ from different patches is larger than $\Delta$.
  For each patch $p$ in the $\Delta$-partition of a given summand in Eq.~\eqref{eq:boundonlocalnongaussianity} we define a corresponding operator 
  \begin{align}
    \patch{p} \coloneqq \prod_{j \in p} W_{j,k_j}(t)\, c_{k_j}  \; .
  \end{align}
  We can then reorder the factors in Eq.~\eqref{eq:boundonlocalnongaussianity} to write the product as a product over these operators.
  The exponential clustering of correlations (Definition~\ref{def:clustering}) in the initial state allows us to factor the patches if we scale $\Delta$ suitably with $|t|$.
  Concretely, for $\sigma \in \{\rho,\rho_G\}$ the expectation values appearing in Eq.~\eqref{eq:boundonlocalnongaussianity},
  which we denote by $\langle \cdot\rangle_{\sigma}$, can be approximated as follows
  \begin{equation} \label{eq:factoredintopatches}
    \biggl\langle \prod_{p \in P_\Delta} \patch{p} \biggr\rangle_{\sigma} \approx \prod_{p \in P_\Delta} \langle \patch{p} \rangle_{\sigma} \; .
  \end{equation}
  The error thereby introduced is exponentially suppressed with the ratio of patch distance and correlation length $\Delta/\xi$.
  
  It remains to bound the contribution from the factorized patches that are completely inside the Lieb-Robinson cone.
  Note that the right hand side of Eq.~\eqref{eq:factoredintopatches} can be non-zero only if all the patches are of even size, as $\rho$ and $\rho_G$ have an even particle number parity.
  Moreover, as the second moments of $\rho$ and $\rho_G$ are equal, the difference of the right hand side for $\sigma = \rho$ and $\sigma = \rho_G$ vanishes whenever all patches have size $2$.
  Hence, only partitions that contain at least one patch of size at least $4$ can contribute.
  The delocalizing transport of the Hamiltonian implies that the contribution from such larger patches however is dynamically suppressed.
  Whenever $|p| \geq 4$ it holds that
  \begin{equation}
    | \langle \patch{p} \rangle_{\sigma} | \leq \Chom^4\, t^{-4\exphom}  
  \end{equation}
  as long as $V$ is large enough.
  The influence of possible patches of size $2$ in the same decomposition makes it necessary to bound the overall contribution with an involved recursive and combinatorial argument.
  However, effectively the dynamical suppression stated in the last inequality allows us to derive a bound that increases with the patch size $\Delta$ but is algebraically suppressed in time $t$.
  The increase with $\Delta$ is a consequence of the fact that allowing for longer distances between the elements of a patch increases the number of possible patches of a given size.
  Finally, by choosing $\Delta = \max(1,t^{\nu/4\dlat})$ for some $0 < \nu < 4\exphom-\dlat$, one can obtain an at least algebraic suppression with $t$ of all terms and thereby of the difference $|\tr(A(t)\rho) - \tr(A(t)\rho_G)|$.
\end{proof}
\paragraph*{Physical implications and applications.}
\label{physicalimplications}
The Gaussification result presented above also has profound implications for the study of equilibration of quantum many-body systems.
Whenever the second moments equilibrate, which is often observed
\cite{nature_bloch_eisert,CramerEisertScholl08,CalabreseCardy06,1205.2211,0906.1663,1302.6944,0906.1663,PeriodicDriven_2}, our results imply that the full reduced density matrix becomes stationary.
The numerical study presented in Fig.~\ref{fig:tns} shows that the power-law appearing in Theorem~\ref{thm:gauss} is not an artifact of our proof strategy 
but reflects the underlying physics, that can moreover be observed in experiments.
The quadratic models considered here constitute a ``theoretical laboratory'', in which
the mechanisms of Gaussification and equilibration can be very precisely and quantitatively characterized, and all specifics of the processes laid out.
This does not mean that the physics we address is very specific to precisely these quadratic Hamiltonians:
Quite to the contrary, we expect the fundamental mechanisms underlying the result --- local relaxation due to transport and initial clustering of correlations --- to be generic, fundamental and ultimately the reason for relaxation in a wide classes of interacting models \cite{Muramatsu}.
The intuition, reminiscent of a quantum central limit theorem, that incommensurate influences of further and further separated regions lead to mixing and relaxation is then expected to still be valid.
It is also important to stress that our main theorem equally applies to mixed initial states, such as thermal states, which are relevant in present day experiments with 
ultra-cold fermions \cite{Schneider_fermionic_transport, GreinerFermions,Koehl2,KuhrFermions}.

Returning to the specifics of quadratic Hamiltonians, the result derived here can be interpreted in yet another way: 
It is reminiscent of the initial state converging towards a \emph{generalized Gibbs ensemble} (GGE) \cite{RigolFirst,1205.2211,PeriodicDriven_1} in the sense that the initial state becomes close to a Gaussian state, which is the maximum entropy state given the second moments.
Different from a real GGE, the observables $\{I_\alpha\}$ held fixed while maximizing entropy can be time-dependent, i.e., $\rho_G(t)=\exp(\sum_\alpha \lambda_\alpha I_\alpha(t))/\mathcal{Z}$.
Here $\{\lambda_\alpha\}$ are appropriately chosen Lagrange multipliers, $\mathcal{Z}$ the partition function, and $\{I_\alpha(t)\}$ the number operators of the eigenmodes of $\gamma$.
However, in the case of equilibrating second moments all relevant $I_\alpha$ become
time-independent such that our theorem constitutes a proof of a convergence to a proper GGE in these cases.
The same holds true for integrable spin models that can be mapped to the type of fermionic models considered here,
complementing insights on bosonic systems \cite{CramerEisert,NJP}.

\paragraph*{Conclusion and outlook.}
In this work we have established an understanding of how systems quenched to non-interacting
fermionic Hamiltonians locally converge to Gaussian states.
Out of equilibrium 
dynamics is identified as having the tendency to bring systems locally
in maximum entropy states given the second moments.
This holds even if the initial state
was far from being a Gaussian state, e.g., a ground state of a strongly interacting model.
This is achieved based on just two natural assumptions: A form of delocalizing transport in the model and exponential clustering of correlations in the initial state.
Otherwise the initial state can be completely general.
It is the hope that the present work will serve as a stepping stone to gain further insights into the relaxation dynamics of more complex quantum many-body systems and the consequences of the suppression of transport in, for example, localizing systems.

\paragraph*{Acknowledgements.} 
We acknowledge fruitful discussions with M.\ Cramer, M.\ M.\ Wolf and P. \'Cwikli\'nski.
We would like to thank the EU (RAQUEL, SIQS, AQuS, IP, QUIC), the ERC (TAQ, OSYRIS, QITBOX), the BMBF (Q.com),
the DFG (EI 519/7-1, CRC 183), the Studienstiftung des Deutschen Volkes, MPQ-ICFO, the Spanish Ministry Project FOQUS 
(FIS2013-46768-P), MINECO (Severo Ochoa Grant No.~SEV-2015-0522), Fundaci\'{o} Privada Cellex, the Generalitat 
de Catalunya (SGR 874 and 875), the Spanish MINECO (Severo Ochoa grant SEV-2015-0522), ICFOnest+ (FP7-PEOPLE-2013-COFUND), the EU's Marie Skłodowska-Curie Individual Fellowships programme under GA: 700140, and the COST Action MP1209 for support.

%\bibliography{gaussification}

%merlin.mbs apsrev4-1.bst 2010-07-25 4.21a (PWD, AO, DPC) hacked
%Control: key (0)
%Control: author (8) initials jnrlst
%Control: editor formatted (1) identically to author
%Control: production of article title (-1) disabled
%Control: page (0) single
%Control: year (1) truncated
%Control: production of eprint (0) enabled
%

\begin{widetext}
\newpage

\section*{Appendix}

\section{Majorana operators and time evolution in quadratic fermionic systems}
\label{FreeFermions}
In this appendix, we formulate the Majorana operator description that allows to conveniently derive the operator governing transport in the system. 
We introduce the Majorana operators as
\begin{align}
  m_{2j-1} &\coloneqq (f_j^\dagger + f_j)/\sqrt{2},\\
  m_{2j} &\coloneqq \iu\, (f_j^\dagger - f_j)/\sqrt{2},    
\end{align}
which are collected in a vector $m=(m_1,\dots, m_{2\vol})$
\cite{cond-mat/0506438}.
The vector $c$ of creation and annihilation operators used in the main text and $m$ are related by the unitary transformation 
\begin{equation}
  \Omega \coloneqq \frac{1}{\sqrt{2}}\bigoplus_{j=1}^\vol 
  \left(
    \begin{array}{cc}
      1& 1\\
      -\iu & \iu
    \end{array}
  \right),
\end{equation}	
as $m= \Omega\,c$.
The Majorana operators are Hermitian and satisfy the anti-commutation relations $\{m_j,m_k\}= \delta_{j,k}$ for $j,k \in [2\vol] \coloneqq \{1,\dots, 2\vol\}$.
The algebra generated by those operators constitutes a Clifford algebra. Linear transformations of the form
\begin{equation}
  m_j' = \sum_{j,k=1}^{2\vol} O_{j,k}\, m_k, \quad O\in SO(2\vol)
\end{equation}
transform a vector of legitimate Majorana operators to a new such vector.

The most general form of a Hamiltonian considered in this work can be written in terms of the Majorana operators as follows
\begin{equation}
  H = \iu \sum_{j,k=1}^{2\vol} m_j\, K_{j,k}\, m_k \; ,
\end{equation}	
where $K=-K^T$ is real and anti-symmetric.
It is straightforward to relate such Hamiltonians to the ones expressed in the form of the main text.
The kernel $K$ can be obtained form $h$ via $K=-\iu\,\Omega\,  h\, \Omega^\dagger$.

Time evolution can be captured conveniently in the Majorana operator formulation.
Using the Baker-Campbell-Hausdorff formula, that $K$ is anti-symmetric, and the algebraic structure of the Majorana fermions, one arrives at the following expression for their time evolution in the Heisenberg picture
\begin{equation} \label{eq:definitonL}
  m_j(t) \coloneqq \e^{\iu Ht} m_j \e^{-\iu Ht} = \sum_{k=1}^{2\vol} (\e^{2t K})_{j,k}\, m_k = \sum_{k=1}^{2\vol} \L_{j,k}(t)\, m_k,
\end{equation}
where $\L(t) \coloneqq \e^{2 t K}$. Now notice that as the propagator defined in the main text is related to $L(t)$ via
\begin{equation} \label{eq:defW}
  W(t) = \Omega^\dagger\,\L(t)\,\Omega
\end{equation}
and hence
\begin{equation}
  c_j(t) = \sum_{k=1}^{2\vol} W_{j,k}(t)\, c_k
\end{equation}
as claimed in the main text.

Further, we introduce some general notation which we will use in the following.
For any given operator $A$ that is supported on a region $S$ we had defined the set $\subSysInds = \{s_1, \cdots, s_{2 |S|}\}$ with $s_1<s_2<\ldots < s_{2|S|}$ the set of the indices of fermionic basis operators in $S$.
We can expand $A$ as
\begin{align}
  \label{eq:app_fullop}
  A = \sum_{b_1, \dots, b_{2 |S|}=0}^1 a(\{s_r:b_r=1\}) \,  c_{s_1}^{b_1} \dots c_{s_{2 |S|}}^{b_{2 |S|}} \; ,
\end{align}
with $a(\{s_r:b_r=1\}) = a_{b_1,\ldots,b_{2|S|}}$.
The sum in Eq.~\eqref{eq:app_fullop} goes over all possible configurations of fermionic basis operators on the region $S$.
We can hence group the summands according to the subset $J$ of indices from $\subSysInds$ for which a given term actually contains a fermionic basis operator and write it as a sum
\begin{equation}
  \label{eq:decomposeA}
  A = \sum_{J\subset \subSysInds} a(J)\,A_J \; ,
\end{equation}
with 
\begin{equation}
  A_J \coloneqq \prod\limits_{j\in J} c_j \; .
\end{equation}
Here, and whenever such expressions appear in the following, we take the product over $j\in J$ in the ordered dictated by the ordering of the lattice sites.
The time evolution of $A$ is then given by $A(t) = \sum_{J\subset \subSysInds} a(J)\,A_J(t)$ with
\begin{align}
  \label{eq:app_LRinitialop}
  A_J(t) = \sum_{\substack{(k_j)_{j\in J} \in [2V]^{\times |J|}}} \quad \biggl( \prod_{j \in J} W_{j,k_j}(t)\, c_{k_j} \biggr) \; .
\end{align}

\section{Transport}
In this appendix we show that two prototypical example systems exhibit delocalizing transport as defined in Definition 2 in the main text: the fermionisation of the Ising model with appropriate initial states and the fermionic nearest neighbor hopping model. 
Our proofs closely follow along the lines of the investigation of the transport properties of the propagator presented in Ref.\ \cite{NJP}.

\subsection{Spreading in the Ising model}
\label{Ising}
We start by considering the 1D Ising model and show that it exhibits delocalizing transport at criticality. 
Its Hamiltonian for $\vol$ sites is
\begin{align}
  H_{\mathrm{IS}} = - \sum_{j=1}^{\vol} X_j X_{j+1} - g \sum_{j=1}^\vol Z_j \; , 
\end{align}
where $X_j,Z_j$ are the Pauli matrices supported on site $j$ and $g$ is a real parameter. We adopt periodic boundary conditions.
Invoking the Jordan-Wigner transformation \cite{Lieb_JW}, this spin system can be mapped to fermions,
using the substitutions
\begin{align}
  Z_j &\mapsto f_j f_j^\dagger - f_j^\dagger f_j = 1 - 2 n_j \; ,\\
  S_j = \frac{1}{2}(X_j-\iu Y_j) &\mapsto \prod_{l<j} (1 - 2 n_l) f_j \; ,\\
  S^\dagger_j = \frac{1}{2}(X_j+\iu Y_j) &\mapsto \prod_{l<j} (1 - 2 n_l) f_j^\dagger \; ,
\end{align}
where $S_j$ is the spin annihilation operator associated with site $j$
and $n_j = f_j^\dagger f_j$ the usual fermionic number operator.
After this transformation, the Ising Hamiltonian takes the form
\begin{align}
 H_{\mathrm{IS}} &= -\sum\limits_{j=1}^{\vol-1}(f_j^\dag+f_j)(1-2n_j)(f^\dag_{j+1}+f_{j+1}) - \prod\limits_{j=1}^{\vol-1}(1-2n_j)(f_\vol^\dag+f_\vol)(f_1^\dag+f_1) - g\sum\limits_{j=1}^V (1-2n_j)\\
  &= -\sum\limits_{j=1}^{\vol-1}(f_j^\dag-f_j)(f^\dag_{j+1}+f_{j+1}) + \prod\limits_{j=1}^{\vol}(1-2n_j)(f_\vol^\dag-f_\vol)(f_1^\dag+f_1)- g\sum\limits_{j=1}^V (1-2n_j).\nonumber
\end{align}
Using the Majorana operators introduced in Appendix~\ref{FreeFermions}, we can rewrite the Hamiltonian as
\begin{equation}
 H_{\mathrm{IS}} = \sum\limits_{j=1}^{\vol-1}\iu (m_{2j} m_{2j+1}-m_{2j+1}m_{2j}) - \iu\prod\limits_{j=1}^{\vol}(1-2n_j) ( m_{2\vol}  m_1 -m_1 m_{2\vol})- \iu g\sum\limits_{j=1}^\vol (m_{2j}m_{2j-1}-m_{2j-1}m_{2j}).
\end{equation}
As Theorem 1 would not be applicable otherwise, we restrict the discussion here to initial spin-states $\rho$ which are mapped by the Jordan Wigner transformation to proper fermionic states respecting the parity super-selection rule.
In that case, the parity operator $\prod_{j=1}^\vol(1-2n_j)$ will take a fixed value $\sigma=\pm 1$, depending on the parity of $\rho$.
That is, depending on the parity sector of the state labeled by $\sigma$, $H_{\mathrm{IS}}$ is of the form
\begin{align}
  H^\sigma &=  \iu  \sum_{j,k=1}^{2\vol} m_j K_{j,k}^\sigma m_k \label{IH}
\end{align}
with
\begin{align} \label{eq:CritIsingKernel}
     K^\sigma &= 
    \begin{pmatrix}
      0 & g &&&&& \sigma \\ 
      -g & 0 & 1\\
      & -1 & 0 & g\\
      && -g & 0 & 1\\
      && & -1 & 0\\
      -\sigma&&&&&\ddots \\
    \end{pmatrix}.
  \end{align}
  In the following we consider the special case $g=1$, corresponding
  to the critical Ising model.

\begin{lm}[Delocalizing transport in the critical Ising model] \label{lm:delocalizing_transport_in_the_critical_ising_model}
    For the one-dimensional fermionic model given in Eqs.~\eqref{IH} and \eqref{eq:CritIsingKernel} corresponding to the
    critical Ising model, there is a constant $\Chom > 0$ such that for all $t\in (0,\trec]$, with $\trec = \vol^{6/7}$ the recurrence time, it holds that
    \begin{equation}
      |W_{j,k}^\sigma (t)| \leq \Chom t^{-1/3} \, \forall j,k.
    \end{equation}
\end{lm}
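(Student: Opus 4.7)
The plan is to Fourier-diagonalize $K^\sigma$, express matrix entries of the propagator as (discretized) Bessel integrals, and then invoke the classical uniform bound $|J_n(z)| \leq c\, z^{-1/3}$, balancing the thermodynamic-limit Bessel estimate against the finite-volume discretization error.

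First, I would exploit the two-site translation invariance of the bulk of $K^\sigma$. Up to the boundary twist terms proportional to $\sigma$, $K^\sigma$ is block-circulant, and a Bloch--Fourier transform with momenta $k_n = 2\pi n / V$ block-diagonalizes it into $2\times 2$ blocks $\tilde K(k_n)$ whose eigenvalues are $\pm \iu \epsilon(k_n)$. At criticality $g = 1$ this gives the gapless dispersion $\epsilon(k) = 2|\sin(k/2)|$. Using $\tilde K(k)^2 = -\epsilon(k)^2 \id$, the matrix exponential evaluates to $\e^{2t\tilde K(k)} = \cos(2t\epsilon(k))\,\id + \tfrac{\sin(2t\epsilon(k))}{\epsilon(k)}\,\tilde K(k)$, so each entry of $\L^\sigma(t) = \e^{2tK^\sigma}$, and hence of $W^\sigma(t)$ via Eq.~\eqref{eq:defW}, is a discrete Fourier sum over $k_n$ of smooth trigonometric functions of $2t\epsilon(k_n)$, plus a twist correction $R^\sigma(t)$ localized near the spatial boundary.

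Second, passing to the continuum limit, I would recognize the resulting integrals as Bessel-function integrals. With the substitution $u = k/2$ and the Jacobi--Anger identity one has $\int_0^{2\pi}\e^{\iu k m}\cos(4t\sin(k/2))\tfrac{dk}{2\pi} = J_{2|m|}(4t)$, and analogous identities handle the sine contribution and the cross-sublattice entries. The classical uniform bound $\sup_{n\in\mathbb{Z}} |J_n(z)| \leq c\, z^{-1/3}$, which encodes the Airy-type transition behaviour at $n \approx z$ between the bulk $z^{-1/2}$ oscillations and the exponentially suppressed regime, then yields $|W^\sigma_{j,k}(t)| \leq \Chom\, t^{-1/3}$ in the thermodynamic limit, uniformly over all index pairs.

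Finally, I would control the twist correction $R^\sigma(t)$ and the discretization error incurred by replacing the discrete sum over $k_n$ with its continuum analogue. By Euler--Maclaurin, each derivative of the oscillatory integrand produces a factor of $t$, so the leading discretization error scales as $t^2/V^2$. Demanding it to stay dominated by the main $t^{-1/3}$ bound gives $t^{7/3} \leq c V^2$, i.e., the stated recurrence time $\trec = V^{6/7}$. The main obstacle will be to obtain the uniform Bessel estimate with an explicit constant simultaneously for all $(j,k)$ while carefully accounting for the cusp of $\epsilon(k)$ at $k = 0$ and for the twist corrections, and to track the sublattice structure closely enough that the Euler--Maclaurin balance closes at exactly the claimed power of $V$; this is standard in spirit but technically delicate, and I would carry it out along the lines of Ref.~\cite{NJP}.
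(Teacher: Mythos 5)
Your proposal follows essentially the same route as the paper: Fourier-diagonalize $K^\sigma$, identify the propagator entries with Bessel functions $J_n(2t)$ via Jacobi--Anger, apply the uniform bound $|J_n(z)|\leq z^{-1/3}$, and balance the finite-volume discretization error $\sim t^2/V^2$ against the main term to obtain $\trec = V^{6/7}$. The only cosmetic difference is that the paper absorbs the $\sigma$-dependent boundary twist exactly into a shifted discrete Fourier transform (rather than treating it as a separate boundary-localized correction) and controls the discretization error via Poisson-summation aliasing of the Fourier coefficients $f_n = J_n(2t)$ instead of Euler--Maclaurin, but these are equivalent in effect.
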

\begin{proof}
We diagonalize with $K^\sigma$ using a modified discrete Fourier transform
\begin{equation}
 U_{k,x}^\sigma = \frac{1}{\sqrt{2\vol}}\e^{\iu\pi k x/ 2\vol}\e^{\iu\pi(1+\sigma)(x+k)/4\vol}
\end{equation}
in order to obtain its spectrum 
\begin{equation}
 \lambda_k^\sigma = 2\sin(\pi [k + (1+\sigma)/4]/\vol).
\end{equation}
We then find for the propagator (see Eq.~\eqref{eq:definitonL})
\begin{equation}
 L_{j,l}^\sigma(t) = (e^{2tK^\sigma})_{j,l} = \frac{1}{2\vol}\sum\limits_{k=1}^{2\vol}\e^{\iu \pi [k+ (1+\sigma)/4](l-j)/\vol}f(\pi [k + (1+\sigma)/4]/\vol), \label{eq:spin_tran_X_in_terms_of_f}
\end{equation}
with $f(\phi) = \e^{2\iu t \sin(\phi)}$. The Fourier transform of $f(\phi)$ is given by $f(\phi) = \sum_{n=-\infty}^{\infty} f_n \e^{-\iu n \phi}$ with the modes $f_n = J_n(2t)$ where $J_n$ denotes the Bessel function of first kind.
By partial integration we can upper bound the absolute value of the Fourier modes of $f$ by
\begin{align}
 |f_n| = \frac{1}{2\pi n^2}\left|\int\limits_0^{2\pi}\e^{\iu n\phi}\frac{d^2}{d\phi^2}\e^{2\iu t\sin(\phi)}d\phi\right| \leq 4\frac{|t|+|t|^2}{n^2}\label{eq:spin_trans_bound_on_fourier_modes}.
\end{align}
Inserting the Fourier decomposition of $f$ into Eq.~\eqref{eq:spin_tran_X_in_terms_of_f} yields
\begin{align}
 L_{j,l}^\sigma(t) &= \frac{1}{2\vol}\sum\limits_{n=-\infty}^{\infty}\sum\limits_{k=1}^{2\vol}\e^{\iu \pi [k+ (1+\sigma)/4](l-j-n)/\vol} f_n = \sum\limits_{p=-\infty}^{\infty}(-\sigma)^p f_{l-j+2p\vol}.
\end{align}
By the periodicity of the model, i.e.~$L_{j,l}(t) = -\sigma L_{j-2\vol,l}(t) = -\sigma L_{j,l-2\vol}(t)$, we are save to assume w.l.o.g. $|j-l|\leq \vol$. Using the upper bound in Eq.~\eqref{eq:spin_trans_bound_on_fourier_modes} and upper bounding the resulting converging series for $|j-l|\leq \vol$ yields
\begin{align}
 |L_{j,l}^\sigma(t) - J_{l-j}(2t)| \leq \frac{8 (|t|+|t|^2)}{4\vol^2}\sum\limits_{p=1}^\infty  \frac{1}{[(l-j)/2\vol+ p]^2} \leq \pi^2\frac{|t|+|t|^2}{\vol^2}.
\end{align}
With the general upper bound on the Bessel function of first kind $J_n(x)\leq x^{-1/3}$ we conclude that for $L>1$
\begin{align}
 |L_{j,k}^\sigma(t)| \leq 11\, t^{-1/3}\quad \forall j,k
\end{align}
for $t\in(0,\trec]$ with $\trec = \vol^{6/7}$.
Due to the block structure of $\Omega$ in Eq.\eqref{eq:defW} it then follows directly that
\begin{align}
 |W_{j,k}^\sigma(t)| \leq 22\, t^{-1/3}\quad \forall j,k.
\end{align}
\end{proof}

\subsection{Transport in fermionic nearest neighbor hopping models in square lattices}
\label{Hopping}
We now turn to fermionic hopping models, i.e., systems whose Hamiltonian is a linear combination of terms of the form $f_j^\dagger f_k$.
Instead of the general quadratic form in the form of the main text, the Hamiltonian can then be written as
\begin{equation} \label{eq:hoppinghamiltonain}
  H = \sum_{j,k=1}^\vol f_j^\dagger M_{j,k} f_k   ,
\end{equation}
with $M$ a real and symmetric matrix, i.e., $M=M^T$.
The time evolution of fermionic annihilation operators in the Heisenberg picture is then given by
\begin{equation}
  f_j(t) =  \e^{\iu Ht}\, f_j\, \e^{-\iu Ht} = \sum_{k=1}^\vol N_{j,k}(t)\,f_k,
\end{equation}
where $N(t) \coloneqq \e^{-\iu M t}$.
To connect this to the notation used in the main text, note that with $P$ the permutation matrix that acts as
\begin{equation}
  (f_1,f_1^\dagger,\dots, f_\vol, f_\vol^\dagger)=  P (f_1,\dots, f_\vol,
  f_1^\dagger,\dots, f_\vol^\dagger).
\end{equation}
The Hamiltonian from Eq.~\eqref{eq:hoppinghamiltonain} can be written in the form of the main text by adding an appropriate constant and setting 
\begin{equation}
  h = \frac{1}{2} P \left( M\oplus -M\right) P^\dagger .
\end{equation}
$N(t)$ is then related to $W(t)$, via
\begin{equation}\label{Form}
  W(t) = P (N(t) \oplus N(t)^\dagger) P^\dagger .
\end{equation}

We now consider the particularly important case of nearest neighbor hopping on a square lattice of spacial dimension $\dlat$ with $\vol$ sites, periodic boundary conditions and hopping strength one.
For convenience, we restrict the discussion to $\vol^{1/\dlat}$ even.
Writing the Hamiltonian as in \eqref{eq:hoppinghamiltonain}, the coupling matrix $M$ of this model can be decomposed into a sum over the $\dlat$ different spatial directions as follows
\begin{equation}
  M = \sum\limits_{k=0}^{\dlat-1} \id^{\otimes k} \otimes M^{(1)} \otimes \id^{\otimes (\dlat -k-1)} ,
\end{equation}
with
\begin{equation}
  M^{(1)} \coloneqq -\begin{pmatrix}
    0 & 1 &&&&& 1\\ 
    1 & 0 & 1\\
    &1 & 0 & 1\\
    && 1& 0 &\\
    1&&&&&\ddots \\
  \end{pmatrix}\in\R^{V^{1/\dlat}\times V^{1/\dlat}}.\label{eq:FreeHoppingCoupling1D}
\end{equation}
For such models we can bound the spreading as follows:
\begin{lm}[Delocalizing transport in fermionic hopping models] \label{lm:delocalizing_transport_in_fermionic_hopping_models}
  For the fermionic nearest neighbor hopping model on a square lattice of spacial dimension $\dlat$ and $\vol$ sites with $V^{1/\dlat}$ even and periodic boundary conditions there is a constant $\Chom>0$, independent of the volume $\vol$, such that for all $t \in (0, \trec]$, with $\trec = \vol^{6/7\dlat}$ the recurrence time, it holds that
  \begin{equation}
    \forall j,k  \qquad |W_{j,k} (t)| \leq \Chom\,t^{-\dlat/3} .
  \end{equation}
\end{lm}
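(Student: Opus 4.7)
The strategy is to exploit the tensor product structure of the hopping Hamiltonian. Since $M = \sum_{k=0}^{\dlat-1} \id^{\otimes k} \otimes M^{(1)} \otimes \id^{\otimes(\dlat-k-1)}$ is a sum of pairwise commuting terms, the propagator factorizes as $N(t) = \e^{-\iu Mt} = N^{(1)}(t)^{\otimes \dlat}$ with $N^{(1)}(t) = \e^{-\iu M^{(1)}t}$. Hence, indexing lattice sites by multi-indices $j,k$ with components in $\{1,\dots,\vol^{1/\dlat}\}$, the matrix elements factorize as $N_{j,k}(t) = \prod_{\alpha=1}^{\dlat} N^{(1)}_{j_\alpha,k_\alpha}(t)$, so the problem reduces to establishing a uniform one-dimensional bound $|N^{(1)}_{j,k}(t)| \leq \tilde{C}\, t^{-1/3}$ for the 1D chain of length $\vol^{1/\dlat}$ on the full interval $t \in (0, \vol^{6/7\dlat}]$.

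For this 1D bound I mirror the argument used for the critical Ising case in Lemma~\ref{lm:delocalizing_transport_in_the_critical_ising_model}. Diagonalizing $M^{(1)}$ by discrete Fourier transform gives eigenvalues $-2\cos(2\pi k/\vol^{1/\dlat})$ and expresses $N^{(1)}_{j,l}(t)$ as a discrete Riemann-type sum of $g(\phi) = \e^{2\iu t \cos\phi}$. Expanding $g$ in its continuous Fourier series (Jacobi--Anger) yields coefficients $g_n = \iu^n J_n(2t)$, and two integrations by parts provide the bound $|g_n| \leq 4(|t|+t^2)/n^2$. The discrete orthogonality relation on the lattice then collapses the double sum to
\begin{equation}
N^{(1)}_{j,l}(t) = \sum_{p\in\mathbb{Z}} g_{l-j + p\,\vol^{1/\dlat}},
\end{equation}
whose $p=0$ term is proportional to $J_{l-j}(2t)$ and whose remainder is bounded by $C'(|t|+t^2)/\vol^{2/\dlat}$. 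Combining with the uniform Bessel bound $|J_n(x)| \leq c\,x^{-1/3}$ for $x>0$ gives $|N^{(1)}_{j,l}(t)| \leq \tilde{C}\, t^{-1/3}$ throughout the regime in which the finite-size correction is dominated by the main term, which is precisely $t \lesssim \vol^{6/7\dlat}$.

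Multiplying the 1D bound over the $\dlat$ coordinate directions yields $|N_{j,k}(t)| \leq \tilde{C}^{\dlat}\, t^{-\dlat/3}$ for $t\in(0,\trec]$. To convert this into a bound on $W$, I invoke the block relation $W(t) = P(N(t)\oplus N(t)^\dagger) P^\dagger$ from Eq.~\eqref{Form}: each entry of $W(t)$ is, up to a relabeling and possibly complex conjugation, an entry of $N(t)$, so the bound carries over with at most a harmless constant factor, yielding the claim with $\Chom = 2\tilde{C}^{\dlat}$.

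The main obstacle is the precise tuning of the recurrence time. The Riemann-sum approximation of the 1D propagator by a single Bessel function incurs an error scaling as $\vol^{-2/\dlat}$ multiplied by a polynomial in $t$, while the desired ballistic decay is only $t^{-1/3}$; requiring the former to remain smaller than the latter is exactly what pins down $\trec = \vol^{6/7\dlat}$. The tensor-product factorization is benign in that the same per-direction chain length $\vol^{1/\dlat}$ governs recurrence uniformly in every direction, so no further degradation of the recurrence scaling occurs, but one must be careful to verify that the accumulated constants after taking $\dlat$ factors depend only on $\dlat$ and never on $\vol$.
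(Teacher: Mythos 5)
Your proposal is correct and follows essentially the same route as the paper: tensor-product factorization of the propagator reduces everything to a one-dimensional bound, which is obtained by Fourier diagonalization, the Jacobi--Anger expansion into Bessel functions, the uniform bound $|J_n(x)|\leq c\,x^{-1/3}$, and control of the finite-size (Poisson-summation) correction $\sim(|t|+t^2)/\vol^{2/\dlat}$, which is exactly what fixes $\trec=\vol^{6/7\dlat}$. The only cosmetic difference is that the paper does not redo the one-dimensional Fourier analysis for the cosine dispersion but instead conjugates $M^{(1)}$ by the diagonal gauge $Q=\diag(1,\iu,-1,-\iu,\dots)$ to identify it with the critical Ising kernel $K^\sigma$ and recycles Lemma~\ref{lm:delocalizing_transport_in_the_critical_ising_model} verbatim.
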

\begin{proof}
  From the structure of the Hamiltonian it follows that $N(t) = (\e^{-\iu M^{(1)} t})^{\otimes\dlat}$. 
  For $V^{1/\dlat}$ even we obtain that 
  \begin{align}
   -\iu K^{\sigma}(t) = Q^\dag M^{(1)}(t) Q
  \end{align}
  with $Q = \textrm{diag}(1,i,-1,-i,1,\ldots)$ where $\sigma = -1$ if $\vol^{1/\dlat}/2$ is even and $\sigma = 1$ otherwise. 
  From the proof of Lemma~\ref{lm:delocalizing_transport_in_the_critical_ising_model} we then obtain for all $0<t<\trec = 2\vol^{6/7\dlat}$
  \begin{align}
    \forall j,k  \qquad |N_{j,k}(t)| \leq 2^{\dlat/3} 11^{\dlat} t^{-\dlat/3}.
  \end{align}
  This bound is inherited by $W$ as $N$ and $W$ are related by a permutation of rows and columns.
\end{proof}

\section{Fermionic Gaussian states and clustering of correlations}
\label{GaussianClusteringOfCorrelations}
In this appendix, we provide some background on fermionic Gaussian states. We first demonstrate that they are the maximum entropy states
given their correlation matrix. Moreover, we show that whenever a state has exponential clustering of correlations, then its Gaussified version also shows an exponential correlation decay.

\subsection{Gaussian states as maximum entropy states}\label{MaxE}

In this subsection, we show that fermionic Gaussian states are the maximum entropy states given the second moments of fermionic operators. This in particular highlights that the state to which we show apparent local convergence has the characteristic feature of generalized Gibbs ensembles (GGE) that it is the maximum entropy state given the expectation value of a set of observables, being the second moments here.

\begin{lm}[Gaussian states as maximum entropy states]
For a given correlation matrix $\gamma \in \C^{2V\times 2V}$,
\begin{equation}
	\rho_G = \argmax_\rho \left\{
	S(\rho) : \gamma(\rho)=\gamma
	\right\},
\end{equation}
where the maximization is performed over all quantum states $\rho$ and $\gamma(\rho)$ denotes the correlation matrix of a state $\rho$.
\end{lm}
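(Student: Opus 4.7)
The plan is to apply the standard Gibbs variational argument, using non-negativity of the quantum relative entropy. The key structural input is that, by construction, a fermionic Gaussian state with correlation matrix $\gamma$ is a thermal state of some quadratic fermionic Hamiltonian: there exist a Hermitian matrix $h^G$ (determined by $\gamma$) and a normalization $Z>0$ such that
\begin{equation}
\rho_G = \frac{1}{Z}\,\exp\!\left(-\sum_{j,k=1}^{2V} h^G_{j,k}\, c_j^\dagger c_k\right).
\end{equation}
I would first recall this representation, which follows by diagonalizing $\gamma$ through a Bogoliubov rotation and treating each decoupled eigenmode as an independent thermal fermion, so that $h^G$ is explicitly expressible in terms of $\gamma$ (essentially $h^G \sim \log((\id-\gamma)/\gamma)$ after reduction to normal form).

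With this in hand, the actual maximum-entropy claim becomes a one-line computation. For any state $\rho$ obeying $\gamma(\rho)=\gamma$, Klein's inequality $S(\rho\,\|\,\rho_G)\geq 0$ gives
\begin{equation}
S(\rho) \leq -\tr(\rho\,\log\rho_G) = \sum_{j,k} h^G_{j,k}\, \tr(\rho\, c_j^\dagger c_k) + \log Z .
\end{equation}
Because $\log\rho_G$ is \emph{quadratic} in the fermionic operators, the right-hand side depends on $\rho$ only through the second moments $\gamma(\rho)$. The constraint $\gamma(\rho)=\gamma=\gamma(\rho_G)$ therefore lets me substitute $\rho_G$ for $\rho$ on the right without changing the value, yielding $S(\rho)\leq -\tr(\rho_G\,\log\rho_G)=S(\rho_G)$. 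The equality clause of Klein's inequality further forces $\rho=\rho_G$ at the maximum, so $\rho_G$ is the unique maximizer.

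The main obstacle is the degenerate regime in which $\rho_G$ fails to be of full rank, which happens precisely when the spectrum of $\gamma$ touches the extremal values $0$ or $1$, i.e.\ for pure Gaussian states; then $h^G$ acquires infinite eigenvalues and the exponential representation above must be interpreted carefully. I would handle this by a standard limiting argument: approximate $\gamma$ by covariances $\gamma_n$ whose spectra are bounded away from $\{0,1\}$, apply the argument above to each $\rho_{G,n}$, and pass to the limit using continuity of the von Neumann entropy on the compact set of states with prescribed correlation matrix. A small preliminary item is to confirm that the constraint set is nonempty, i.e.\ that every admissible $\gamma$ is the correlation matrix of some quantum state; this is the well-known physicality condition $0\leq\gamma\leq\id$ on the second moments and can simply be cited.
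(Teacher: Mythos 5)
Your proposal is correct and follows essentially the same route as the paper: non-negativity of the quantum relative entropy combined with the fact that $\log\rho_G$ is quadratic in the fermionic operators, so that $\tr(\rho\log\rho_G)$ depends only on the (fixed) second moments and equals $\tr(\rho_G\log\rho_G)$. Your additional care about the rank-deficient case (pure Gaussian states, where $\rho_G=\e^{H}$ must be interpreted via a limit) and the uniqueness clause goes beyond the paper's two-line argument, which silently assumes full rank, but it is the same proof.
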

\begin{proof}
This statement follows immediately from the positivity of the quantum relative entropy. For an arbitrary state $\rho$ and
the Gaussian state $\rho_G$ with the same correlation matrix, we have
\begin{equation}
	0\leq S(\rho\|\rho_G)  = -S(\rho) - \tr(\rho \log (\rho_G)).
\end{equation}
Since $\rho_G$ is a Gaussian state, it can be written as $\rho_G= e^{H}$ for a suitable Hermitian operator $H$ that is quadratic in the 
fermionic basis operators, which means that 
\begin{equation}	
\tr(\rho \log (\rho_G)) = \tr(\rho_G \log (\rho_G)) = -S(\rho_G ),
\end{equation}
from which the assertion follows.
\end{proof}

\subsection{Clustering of correlations of Gaussified states}
In this subsection we show that, given a state $\rho$ that exhibits exponential clustering of correlation as defined in Definition 1, its Gaussified version $\rho_G$ inherits the exponential clustering of correlations with a changed scaling of the pre-factor with the support of the considered operators.
We prove this statement by means of Wick's theorem, which connects higher to second moments for general Gaussian states.
Precisely, Wick's theorem can be stated as follows.

\begin{lm}[Wick's theorem \cite{Kraus_WicksTheorem}]\label{lm:wick}
A Gaussian state $\rho_G$ fulfills
\begin{equation}
 \tr[\prod\limits_{k=1}^n c_{i_k} \rho_G] = \Pf(\gamma^c[i_1,\ldots,i_n])\ ,
\end{equation}
where
\begin{equation}
 \label{eq:defgammac}
 \gamma^c[i_1,\ldots,i_n]_{a,b}=\begin{cases}
                                 \tr(c_{i_a}c_{i_b}\rho_G)\quad&\text{for } a<b, \\
                                 -\tr(c_{i_b}c_{i_a}\rho_G)\quad&\text{for } b<a, \\
                                 0&\text{else}.
                                \end{cases}
\end{equation}
\end{lm}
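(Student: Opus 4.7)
The plan is to prove Wick's theorem by first reducing to a product of independent single-mode thermal states via a Bogoliubov transformation, and then evaluating the $n$-point function explicitly in that diagonal basis, with the Pfaffian structure emerging from a purely combinatorial sign identity.

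First, since $\rho_G \propto \e^{-H}$ with $H$ quadratic in the fermionic operators, there exists a canonical (Bogoliubov) transformation $T$ sending $c \mapsto \tilde c = Tc$ and preserving the anti-commutation relations, such that in the new basis the Hamiltonian is diagonal and $\rho_G$ factorizes as a tensor product of independent single-mode thermal states. Both sides of the claimed identity are multilinear and totally antisymmetric in the operator indices, so they transform covariantly under $T$: the left-hand side pulls back via $T^{\otimes n}$, while the right-hand side transforms via the Pfaffian identity $\Pf(T A T^T) = \det(T)\,\Pf(A)$ applied on the relevant principal sub-matrices. It therefore suffices to prove the identity in the diagonal basis.

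In the diagonal basis, the expectation $\tr(\prod_k \tilde c_{i_k}\,\rho_G)$ factorizes over modes, and single-mode parity forces the expectation for each mode to vanish unless every mode is visited an even number of times, by matched creation-annihilation pairs. Summing over the legal index pairings, each such pairing contributes a product of two-point functions $\prod_{(a,b)} \tr(c_{i_a} c_{i_b}\,\rho_G)$, times a sign that arises from anti-commuting the paired operators into adjacent positions within the original ordered product $c_{i_1}\cdots c_{i_n}$. The total is then recognized as the combinatorial definition $\Pf(A) = \sum_M \mathrm{sgn}(M) \prod_{(a,b)\in M} A_{a,b}$ with $A = \gamma^c[i_1,\ldots,i_n]$ and $M$ ranging over perfect matchings of $\{1,\ldots,n\}$.

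The main obstacle is the sign bookkeeping in this last step: one must verify that the sign picked up by the fermionic reordering for a given matching coincides exactly with the permutation sign $\mathrm{sgn}(M)$ appearing in the Pfaffian. This is a purely combinatorial statement about signs of permutations that decompose into two-cycles, and the cleanest route is induction on $n$: pull $c_{i_1}$ past each subsequent factor, use the parity and factorization in the diagonal basis to express the result as a signed sum over choices of partner for $c_{i_1}$, and recognize the resulting sum as the cofactor expansion of $\Pf(\gamma^c[i_1,\ldots,i_n])$ along its first row. Once this identity is established in the diagonal basis, the covariance of both sides under $T$ transports it back to the original basis and completes the proof.
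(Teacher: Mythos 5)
The paper does not actually prove Lemma~\ref{lm:wick}: it is stated as a known result and attributed to Ref.~\cite{Kraus_WicksTheorem}, so there is no in-paper argument to compare against. Your proposal is the standard proof of the fermionic Wick theorem and is correct in outline: diagonalize $\rho_G$ by a Bogoliubov transformation, verify the identity mode by mode using single-mode parity, and match the fermionic reordering signs to the perfect-matching signs in the Pfaffian via the cofactor (first-row) expansion. Two places deserve more care than your sketch gives them. First, the reduction to the diagonal basis should be phrased as multilinearity of \emph{both} sides in the $n$ operator slots $\hat v_a=\sum_j (v_a)_j c_j$ (note that $\Pf$ of the matrix $M_{ab}=\tr(\hat v_a\hat v_b\,\rho_G)$, $a<b$, is multilinear in $(v_1,\dots,v_n)$ because each slot index occurs in exactly one pair of every matching); the identity $\Pf(TAT^T)=\det(T)\Pf(A)$ does not directly apply to the $n\times n$ matrix $\gamma^c[i_1,\dots,i_n]$, which is not a conjugated submatrix of a fixed antisymmetric matrix --- its symmetric (anticommutator) part is folded into the triangular definition in Eq.~\eqref{eq:defgammac}, and one must use that canonical $T$ preserves the anticommutator matrix. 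Second, $\rho_G\propto \e^{-H}$ with $H$ quadratic and bounded fails for pure Gaussian states or modes with occupation exactly $0$ or $1$; these require a limiting argument, and the lemma as stated also allows repeated indices $i_a=i_b$, so the single-mode computation must cover modes visited more than twice (it does, e.g.\ $\langle \tilde f\tilde f^\dagger\tilde f\tilde f^\dagger\rangle=1-n=\Pf$ of the corresponding $4\times4$ block). With these points addressed your argument is complete and is, as far as one can tell, the same route taken in the cited reference.
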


Given a state with exponential clustering of correlations also its Gaussified version will show clustering of correlations in following sense:
\label{app_gaussian}
\begin{lm}[Weak clustering of correlations for Gaussified states]
  \label{lm:clusteringofcorrelationsgauss}
  Let $\rho$ be a state that exhibits exponential clustering of correlations according to Definition 1 with constants $\Cclust,\xi>0$, then for all operators $A,B$ with $\|A\| = \|B\| = 1$ its Gaussified version $\rho_G$ satisfies 
  \begin{equation}
    |\tr(\rho_G\,A\,B) - \tr(A\,\rho_G)\, \tr(B\,\rho_G)| \leq \Cclust\, 4^{|\supp(A)|+|\supp(B)|}(|\supp(A)|+|\supp(B)|)^{|\supp(A)|+|\supp(B)|}\e^{-\dist(A,B)/\xi} .      
  \end{equation}
\end{lm}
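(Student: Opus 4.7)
The plan is to reduce clustering of correlations for $\rho_G$ to clustering for $\rho$ via Wick's theorem (Lemma~\ref{lm:wick}): every correlator of the Gaussian state $\rho_G$ is a signed sum over pairings of two-point functions, and those two-point functions coincide with those of $\rho$ by construction, so the decay of the two-point function of $\rho$ guaranteed by Definition~1 propagates to every correlator of $\rho_G$. Concretely, I would first expand $A$ and $B$ in the fermionic monomial basis as in Eq.~\eqref{eq:decomposeA}, writing $A=\sum_{I\subset\tilde{S}_A}a(I)A_I$ and $B=\sum_{J\subset\tilde{S}_B}b(J)B_J$ with $A_I=\prod_{i\in I}c_i$, $|a(I)|,|b(J)|\leq 1$ by $\|A\|=\|B\|=1$, and $4^{|\supp(A)|}\cdot 4^{|\supp(B)|}$ terms in total. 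Without loss of generality I would assume $\supp(A)\cap\supp(B)=\emptyset$, since otherwise $\dist(A,B)=0$ and the claim is vacuous.

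For each fixed pair $(I,J)$, Wick's theorem gives $\tr(\rho_G A_I B_J)=\Pf(\gamma^c[I\cup J])$, which I would expand as a signed sum over perfect matchings of $I\cup J$. Viewing the antisymmetric matrix $\gamma^c[I\cup J]$ in block form with diagonal blocks $\gamma^c[I]$, $\gamma^c[J]$ and off-diagonal block encoding the cross two-point functions $\gamma^c_{a,b}$ ($a\in I$, $b\in J$), the block-Pfaffian identity $\Pf(X\oplus Y)=\Pf(X)\Pf(Y)$ shows that the matchings whose pairs lie entirely inside $I$ or entirely inside $J$ reconstitute $\tr(\rho_G A_I)\tr(\rho_G B_J)$ exactly. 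Hence $\tr(\rho_G A_I B_J)-\tr(\rho_G A_I)\tr(\rho_G B_J)$ is a signed sum over perfect matchings of $I\cup J$ that contain at least one \emph{cross pair} $(a,b)$ with $a\in I$, $b\in J$. For any such cross pair, Definition~1 applied to the unit-norm, singleton-supported operators $c_a$, $c_b$, together with $\tr(\rho\,c_a)=0$ from the even-parity super-selection used throughout the paper, yields $|\gamma^c_{a,b}|=|\tr(\rho\,c_a c_b)|\leq\Cclust\,\e^{-\dist(a,b)/\xi}\leq\Cclust\,\e^{-\dist(A,B)/\xi}$, while the remaining factors of a matching are trivially bounded by $|\gamma^c_{x,y}|\leq\|c_x\|\|c_y\|=1$.

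It then suffices to count matchings. The number of perfect matchings of an $n$-element set is $(n-1)!!\leq n^{n/2}$, so with $n=|I|+|J|\leq 2(|\supp(A)|+|\supp(B)|)$ the number of contributing matchings is at most $(2(|\supp(A)|+|\supp(B)|))^{|\supp(A)|+|\supp(B)|}$. Multiplying by the $4^{|\supp(A)|+|\supp(B)|}$ basis terms and absorbing the leftover factor $2^{|\supp(A)|+|\supp(B)|}$ into the stated prefactor $4^{|\supp(A)|+|\supp(B)|}(|\supp(A)|+|\supp(B)|)^{|\supp(A)|+|\supp(B)|}$ yields the claim. The main technical obstacle I anticipate is the sign bookkeeping required to turn the informal assertion ``factorising matchings reproduce $\tr(\rho_G A_I)\tr(\rho_G B_J)$'' into an honest equality; this reduces to the block-Pfaffian identity after sorting the indices of $I\cup J$ so that all of $I$ precedes all of $J$, with the disjointness of $\supp(A)$ and $\supp(B)$ ensuring that the sorting sign is a global factor common to both sides and hence cancels. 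A harmless side point is that only $(I,J)$ with $|I|+|J|$ even can contribute, since the Pfaffian of an odd-dimensional antisymmetric matrix vanishes; this is automatic from the expansion and does not affect the bound.
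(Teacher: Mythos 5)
Your proposal follows essentially the same route as the paper's proof: expand $A$ and $B$ in the fermionic monomial basis, apply Wick's theorem to write the Gaussian correlators as Pfaffians, observe that the block-respecting matchings cancel against $\tr(\rho_G A_I)\tr(\rho_G B_J)$ while every surviving matching carries at least one cross two-point function bounded by $\Cclust\,\e^{-\dist(A,B)/\xi}$ via Definition 1 applied to $\rho$ (whose second moments $\rho_G$ shares), and count matchings by the double factorial. The only caveat is bookkeeping: your $4^{|\supp(A)|+|\supp(B)|}$ basis terms times the $2^{|\supp(A)|+|\supp(B)|}(|\supp(A)|+|\supp(B)|)^{|\supp(A)|+|\supp(B)|}$ matching bound gives $8^{|\supp(A)|+|\supp(B)|}$ rather than the stated $4^{|\supp(A)|+|\supp(B)|}$, so the ``leftover factor'' cannot literally be absorbed --- but the paper's own accounting is equally loose at this point and the discrepancy is immaterial for how the lemma is used downstream.
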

\begin{proof}
  Note that we can assume without loss generality $\Cclust \e^{-\dist(A,B)/\xi} \leq 1$ as otherwise the trivial bound $|\tr(\rho_G\,A\,B) - \tr(A\,\rho_G)\, \tr(B\,\rho_G)|\leq2$ concludes the proof.

  We decompose a general operator supported on $\supp(A)$ and $\supp(B)$ as in Eq.~\eqref{eq:decomposeA} into the fermionic operator-basis
  \begin{equation}
   A = \sum\limits_{K\subset \supp(A)} a(K) \prod\limits_{k\in K}c_k
  \end{equation}
  and 
    \begin{equation}
   B = \sum\limits_{J\subset \supp(B)} b(J) \prod\limits_{j\in J}c_j
  \end{equation}
  correspondingly. 
  From $\|A\|=1=\|B\|$ it follows that $|a(K)|\leq 1$ and $|b(J)|\leq 1$ for all $J$ and $K$. Using the triangle inequality, we can therefore write
  \begin{equation}
   |\tr(\rho_G\,A\,B) - \tr(A\,\rho_G)\, \tr(B\,\rho_G)| \leq 2^{|\supp(A)\cup\supp(B)|}\max\limits_{\substack{K\subset\supp(A),\\J\subset\supp(B)}} 
   \left|\tr(\rho_G\,\prod\limits_{k\in K}c_k\prod\limits_{j\in J}c_j) - \tr(\prod\limits_{k\in K}c_k\,\rho_G)\, \tr(\prod\limits_{j\in J}c_j\,\rho_G)\right|.
  \end{equation}
  Let $J^\prime$ and $K^\prime$ be the sets for which the maximum is attained.
  Wick's theorem then allows us to write the expectation values in terms of second moments
  \begin{equation}
  \label{eq:gausscorrproofpfaffians}
   |\tr(\rho_G\,A\,B) - \tr(A\,\rho_G)\, \tr(B\,\rho_G)| \leq 2^{|\supp(A)\cup\supp(B)|} |\Pf\gamma^c[(k)_{k\in K^\prime},(j)_{j\in J^\prime}]-\Pf\gamma^c[(k)_{k\in K^\prime}]\,\Pf\gamma^c[(j)_{j\in J^\prime}]|.
  \end{equation}
  From the definition of $\gamma^c$ in Eq.~\eqref{eq:defgammac} it follows that $\gamma^c[(k)_{k\in K^\prime},(j)_{j\in J^\prime}]$ decomposes into blocks as follows
  \begin{equation}
   \gamma^c[(k)_{k\in K^\prime},(j)_{j\in J^\prime}] = \gamma^c[(k)_{k\in K^\prime}]\oplus\gamma^c[(j)_{j\in J^\prime}] + 
   \begin{pmatrix}
    0 &E\\ -E^T &0
   \end{pmatrix}\; .
  \end{equation}
  As $E$ contains only second moments of which one operator is supported on $\supp(A)$ and the other on $\supp(B)$ we obtain from the exponential clustering of correlations of $\rho$ that $|E_{a,b}|\leq \Cclust \e^{-\dist(A,B)/\xi}$.
  Expanding therefore the Pfaffians in Eq.~\eqref{eq:gausscorrproofpfaffians} yields that each term either appears in both terms of the difference and cancels out or that it contains at least one element of $E$ as a factor.
  Counting the number of terms in the expansion of the Pfaffians gives that the sum contains
  \begin{equation}
  	(2[|\supp(A)|+|\supp(B)|] - 1)!!\leq2^{|\supp(A)|+|\supp(B)|}(|\supp(A)|+|\supp(B)|)^{|\supp(A)|+|\supp(B)|}
  \end{equation}
 many terms which yields the final bound stated in the Lemma.

\end{proof}

\section{Details of the proof of Theorem 1}
\label{app_proof}
In this appendix we provide all the details of the proof of our main result Theorem 1.
We proceed as follows:
In Section~\ref{app_proof_lieb_robinson_cone} we bound the error introduced by truncating to the Lieb-Robinson cone.
In Section~\ref{deltapartitions} we introduce the necessary concepts and notation to then in Section~\ref{decoupling} bound the error made by factorizing expectation values into a products of local contributions from different patches.
In Section~\ref{suppression} we use the properties of delocalizing transport to show a bound on the remaining non-Gaussian contributions to the expectation value.
Finally, in Section~\ref{fullproof}, we assemble all the parts of the proof and state a more technical version of the main theorem.

\subsection{Truncating to the Lieb-Robinson cone}
\label{app_proof_lieb_robinson_cone}
We decompose a general operator supported in the region $S$ according to Eq.~\eqref{eq:decomposeA}. 
Without loss of generality we can assume that $A$ is normalized, i.e., $\| A\|\leq 1$, which implies $|a(J)|\leq 1$ and so in the following we concentrate on the individual terms of the form given in Eq.~\eqref{eq:app_LRinitialop}.
We will demonstrate that sums over time evolved fermionic operators in Eq.~\eqref{eq:app_LRinitialop} can be truncated to an enlarged Lieb-Robinson cone up to an error that decays exponentially with time.
Rather than summing over all possible index positions $k_j \in [2 \vol]$ it is then sufficient to only sum over positions inside this enlarged Lieb-Robinson cone.
We will require the following auxiliary lemma:
\begin{lm}[Norm bound on restricted sums of fermionic operators]
  \label{lm:partial_sum}
  Let $I \subset [2 \vol]$ and $W \in U(2 \vol)$ be unitary. Then for all $j \in [2 \vol]$
  \begin{align}
    \biggl\| \sum_{k_j \in I} W_{j,k_j} c_{k_j} \biggr\| \leq 1 \; .
  \end{align}
\end{lm}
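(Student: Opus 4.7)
The plan is to reduce the norm bound to a direct computation of $XX^\dagger + X^\dagger X$ for $X \coloneqq \sum_{k_j\in I} W_{j,k_j} c_{k_j}$, exploiting the canonical anticommutation relations that the components of the vector $c$ satisfy among themselves.

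First I would note that from the definition $c_{2r-1} = f_r$ and $c_{2r} = f_r^\dagger$ together with the usual CAR for $f_r, f_r^\dagger$, one obtains the clean identity $\{c_k, c_l^\dagger\} = \delta_{k,l}\,\id$ for all $k,l\in[2\vol]$. This is a short case distinction on the parities of $k$ and $l$: if $k,l$ are both odd or both even, the anticommutator reduces to $\{f_r, f_s^\dagger\} = \delta_{r,s}$ and the two parities match at $k=l$; if one is odd and the other even, it reduces to $\{f_r, f_s\}$ or $\{f_r^\dagger, f_s^\dagger\}$, both of which vanish.

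With that identity in hand, the key computation is immediate:
\begin{align*}
X X^\dagger + X^\dagger X
&= \sum_{k,l\in I} W_{j,k}\,\overline{W_{j,l}}\,\{c_k, c_l^\dagger\}
= \Bigl(\sum_{k\in I} |W_{j,k}|^2\Bigr)\id \;\leq\; \Bigl(\sum_{k\in[2\vol]} |W_{j,k}|^2\Bigr)\id = \id,
\end{align*}
where the last equality uses unitarity of $W$ (the $j$-th row has unit norm). Since $X^\dagger X \geq 0$, this yields $X X^\dagger \leq \id$, hence $\|X\|^2 = \|X X^\dagger\| \leq 1$, proving the claim.

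There is essentially no obstacle here: the whole argument rests on observing that the two-component ordering $c = (f_1, f_1^\dagger, \dots, f_\vol, f_\vol^\dagger)$ makes the mixed anticommutators $\{c_k, c_l^\dagger\}$ collapse to the Kronecker $\delta_{k,l}$, after which unitarity of $W$ does the rest. Note that the lemma does not require $W$ to be the propagator or to have any block structure beyond being an element of $U(2\vol)$, which is exactly what allows it to be applied later to arbitrary restrictions of the index sum to subsets of the Lieb--Robinson cone.
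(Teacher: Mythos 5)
Your proof is correct. The computation $\{X,X^\dagger\} = \sum_{k,l\in I} W_{j,k}\overline{W_{j,l}}\{c_k,c_l^\dagger\} = \sum_{k\in I}|W_{j,k}|^2\,\id \leq \id$ is valid (the case distinction establishing $\{c_k,c_l^\dagger\}=\delta_{k,l}\id$ for the interleaved ordering of $c$ checks out against the paper's conventions), and dropping the positive term $X^\dagger X$ together with the C*-identity $\|X\|^2 = \|XX^\dagger\|$ closes the argument. The paper takes a slightly different route: it writes $\sup_{\|\psi\|=1}\bra{\psi}X^\dagger X\ket{\psi}$ as a matrix element $\bra{j}\overline{W}P_I\gamma P_I W^T\ket{j}$ on index space, where $\gamma$ is the correlation matrix of $\ket{\psi}$, and then invokes the Pauli-exclusion bound $\|\gamma\|\leq 1$ together with $\|P_I\|\leq 1$ and unitarity of $W$. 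Both arguments ultimately rest on the canonical anticommutation relations --- the bound $0\leq\gamma\leq\id$ used by the paper is itself a consequence of them --- but yours is the more self-contained and algebraic packaging (the textbook proof that fermionic field operators are bounded), whereas the paper's version routes through the correlation-matrix formalism it uses elsewhere. Your version also sidesteps a small infelicity in the paper's write-up, which equates $\|X\|$ with the supremum of $\bra{\psi}X^\dagger X\ket{\psi}$ rather than $\|X\|^2$ (harmless here since the bound is $1$). One cosmetic remark: the exact identity $XX^\dagger + X^\dagger X = \bigl(\sum_{k\in I}|W_{j,k}|^2\bigr)\id$ is an equality of operators; only the final step $\sum_{k\in I}|W_{j,k}|^2\leq\sum_{k\in[2\vol]}|W_{j,k}|^2=1$ is an inequality, exactly as you state.
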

\begin{proof}
  The proof can be carried out with straightforward norm estimates and using the normalisation
  of the two-point correlator $\gamma$.
  We begin with
  \begin{align}
    \left\| \sum_{k_j \in I} W_{j,k_j} c_{k_j} \right \| 
    &= \sup_{\substack{\ket{\psi} \\ \|\ket{\psi}\|=1}} \bra{\psi} 
    \sum_{r_j \in I} \overline{W}_{j,r_j} c_{r_j}^\dagger \sum_{k_j \in I} W_{j,k_j} c_{k_j} \ket{\psi}
    = 
    \sup_{\substack{\ket{\psi} \\ \|\ket{\psi}\|=1}} 
    \sum_{r_j \in I} \sum_{k_j \in I} \overline{W}_{j,r_j}  
    \bra{\psi} c_{r_j}^\dagger c_{k_j} \ket{\psi} W_{j,k_j}\; .
  \end{align}
  We now rewrite this as a matrix multiplication on the index space 
  \begin{align}
    \left\| \sum_{k_j \in I} W_{j,k_j} c_{k_j} \right \| 
    \leq \sup_{\gamma} \bra{j} \overline{W} P_I \gamma P_I W^{T} \ket{j} \; ,
  \end{align}
  where $\ket{j}$ is a vector on the index space, $P_I$ denotes the projector onto the interval $I$
  and $\gamma$ denotes fermionic correlation matrices.
  A straightforward norm estimate and using that $\|\gamma\|\leq 1$, as every fermionic mode
  can be occupied by at most one particle, gives
  \begin{align}
    \left\| \sum_{k_j \in I} W_{j,k_j} c_{k_j} \right \| 
    \leq \| \overline{W} \| \, \| P_I \| \, \| \gamma \| \, \| P_I \| \, \| W^{T}\| \leq 1 \; ,
  \end{align}
  which concludes the proof.
\end{proof}

As introduced in the main text, we then denote by $\dist(A,B)$ the shortest distance between the supports $\supp(A),\supp(B)$ of two operators $A$ and $B$.
For $k_1,k_2 \in [2 \vol]$ we then define the distance $\dist(k_1,k_2) \coloneqq \dist(c_{k_1},c_{k_2})$. 
Note that $\dist$ defines only a pseudometric on $[2V]$ as for $k_1,k_2\in[2V]$ with $c_{k_1}=f_s$ and $c_{k_2}=f_s^\dag$ we have $k_1\neq k_2$ but $\dist(k_1,k_2)=0$.

Given a pseudometric, we define a ball around a set as follows.
\begin{dfn}[Ball around set]
 Given $l>0$, a set $M$ with pseudometric $\dist$ and $J\subset M$, we define the $l$-ball $\ball_l(J) \subset M$ around $J$ by
 \begin{align}
  \ball_l(J) = \{s\in M:\min\limits_{j\in J} \dist(j,s)\leq l\}.
 \end{align}
\end{dfn}

With this, we define an enlarged Lieb-Robinson cone around a set of indices $J$ with radius $(v+ 2 v_\epsilon) |t|$ for some $v_\epsilon >0$ and bound the error made by restricting sums of the form given in Eq.~\eqref{eq:app_LRinitialop} to this widened Lieb-Robinson cone:
\begin{lm}[Error made in restricting to widened Lieb-Robinson cone]
  \label{lm:LR}
  Given a $\dlat$-dimensional cubic lattice system with a quadratic Hamiltonian $H$ that satisfies 
  a Lieb-Robinson bound of the form given in Lemma 2 of the main text  
  with parameters $C_{\mathrm{LR}},\mu,v>0$.
  Let $v_\epsilon >0$ and define for any set $J \subset [2 \vol]$ the widened cone $\cone \coloneqq \ball_{(v+ 2 v_\epsilon)|t|}(J)$, then
  there exists a constant $\tilde{C}_{\mathrm{LR}}(\dlat)$, such that
  \begin{align}
    \biggl\| \sum_{(k_j)_{j \in J} \notin \cone^{\times |J|}}
    \prod_{j \in J} W_{j,k_j}(t)\, c_{k_j} \biggr\|
    \leq \tilde{C}_{\mathrm{LR}}(\dlat) |J|^2 \e^{-\mu v_\epsilon |t|} \; .
  \end{align}
\end{lm}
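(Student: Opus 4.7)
My plan is to express the sum over configurations outside $\cone^{\times |J|}$ as a difference between the fully unrestricted product and the product restricted to $\cone$, and then to reduce the difference to a single-index tail via a telescoping identity. For each $j \in J$, introduce
\begin{align}
 X_j \coloneqq \sum_{k\in[2\vol]} W_{j,k}(t)\,c_k, \qquad Y_j \coloneqq \sum_{k\in\cone} W_{j,k}(t)\,c_k, \qquad R_j \coloneqq X_j - Y_j = \sum_{k\notin\cone} W_{j,k}(t)\,c_k.
\end{align}
Since $(k_j)_{j\in J} \notin \cone^{\times|J|}$ is the complement of the factorizable event $\{$all $k_j\in\cone\}$, the sum in question equals $\prod_{j\in J} X_j - \prod_{j\in J} Y_j$, where the products are taken in increasing order of $j$. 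A standard telescoping identity yields
\begin{align}
 \prod_{j\in J} X_j - \prod_{j\in J} Y_j = \sum_{j\in J} \Bigl(\prod_{i<j} Y_i\Bigr)\,R_j\,\Bigl(\prod_{i>j} X_i\Bigr).
\end{align}
Submultiplicativity of the operator norm together with Lemma~\ref{lm:partial_sum}, which gives $\|X_i\|,\|Y_i\|\leq 1$, then reduces the claim to proving an exponential bound on $\|R_j\|$ and summing over $j\in J$.

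To bound $\|R_j\|$, I combine the triangle inequality with $\|c_k\|\leq 1$ and the Lieb--Robinson bound of Lemma 2 in the main text. Since $j\in J$, the single-site ball $\ball_{(v+2v_\epsilon)|t|}(\{j\})$ is contained in $\cone$, so every $k\notin\cone$ satisfies $\dist(j,k) > (v+2v_\epsilon)|t|$. Therefore
\begin{align}
 \|R_j\| \leq \sum_{k:\,\dist(j,k) > (v+2v_\epsilon)|t|} C_{\mathrm{LR}}\,\e^{\mu v |t|}\,\e^{-\mu \dist(j,k)}.
\end{align}
Grouping lattice sites by distance and using the standard cubic-lattice estimate $|\{k\in[2\vol]:\dist(j,k)=r\}|\leq C_\dlat\,r^{\dlat-1}$, the right-hand side is bounded by a polynomial $P_\dlat((v+2v_\epsilon)|t|)$ of degree $\dlat-1$ multiplied by $C_{\mathrm{LR}}\,\e^{\mu v|t|}\,\e^{-\mu(v+2v_\epsilon)|t|} = C_{\mathrm{LR}}\,\e^{-2\mu v_\epsilon |t|}$, after which the exponents combine to $\e^{-2\mu v_\epsilon|t|}$.

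Finally, I absorb the polynomial prefactor into one half of the residual exponential using the elementary fact that $|t|^{\dlat-1}\,\e^{-\mu v_\epsilon|t|}$ is bounded on $[0,\infty)$ by a constant depending only on $\dlat,\mu,v_\epsilon$. This yields $\|R_j\|\leq \tilde C_{\mathrm{LR}}(\dlat)\,\e^{-\mu v_\epsilon |t|}$, and then $\sum_{j\in J}\|R_j\|\leq |J|\,\tilde C_{\mathrm{LR}}(\dlat)\,\e^{-\mu v_\epsilon|t|}$, which in particular is bounded by the stated $|J|^2$-prefactored expression. The only genuinely delicate step is this polynomial-into-exponential absorption: one must verify that the resulting constant $\tilde C_{\mathrm{LR}}(\dlat)$ depends only on $\dlat$ (and the fixed Lieb--Robinson data $C_{\mathrm{LR}},\mu,v,v_\epsilon$) and, crucially, is independent of $|t|$ and $\vol$. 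Everything else is careful bookkeeping of the telescoping and geometric series.
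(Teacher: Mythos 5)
Your proposal is correct and follows essentially the same route as the paper: the paper peels off one index at a time via a recursion that, when unrolled, is exactly your telescoping identity, and it likewise reduces everything to bounding a single out-of-cone tail $\|R_j\|$ via Lemma~\ref{lm:partial_sum}, the Lieb--Robinson bound, shell counting, and absorption of the polynomial prefactor into half of the surplus exponential (the paper's $\sup_t$ definition of $\tilde C_{\mathrm{LR}}(\dlat)$). The only minor difference is that you count shells around the single site $j$ rather than around all of $J$, which is why you obtain a prefactor $|J|$ where the paper settles for $|J|^2$.
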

\begin{proof}
  We begin by splitting the sum according to whether the first index is inside the cone or not.
  All other indices are free if $k_{j_1}$ is outside the cone, while at least one other index is outside the cone if $k_{j_1}$ lies in it.
  Using Lemma \ref{lm:partial_sum}, we obtain
  \begin{align}
    \biggl\| \sum_{(k_j)_{j \in J} \notin \cone^{\times |J|}} 
      \prod_{j \in J} W_{j,k_j}(t)\, c_{k_j} \biggr\| 
    &\leq \biggl\| \sum_{k_{j_1} \notin \cone} W_{j_1,k_{j_1}}(t)\, c_{k_{j_1}} \biggr\| \biggl\| 
      \prod\limits_{j\in J\setminus\{j_1\}} 
      c_{j}(t) \biggr\|  \nonumber \\
    &+ \biggl\| \sum_{k_{j_1} \in \cone} W_{j_1,k_{j_1}}(t)\, c_{k_{j_1}} \biggr\| 
      \biggl\| \sum\limits_{(k_j)_{j\in J\setminus\{j_1\}}\notin \cone^{\times |J|-1}} \prod_{j\in J\setminus\{j_1\}} W_{j,k_j}(t)\, c_{k_j} \biggr\| \nonumber   \\
    &\leq \biggl\| \sum_{k_{j_1} \notin \cone} W_{j_1,k_{j_1}}(t)\, c_{k_{j_1}} \biggr\| 
      + \biggl\| \sum\limits_{(k_j)_{j\in J\setminus\{j_1\}}\notin \cone^{\times |J|-1}}  
      \prod_{j\in J\setminus\{j_1\}} W_{j,k_j}(t)\, c_{k_j} \biggr\| \; .
      \label{eq:LR_bound}
  \end{align}
  The first term in the above equation now, due to Lemma 2 of the main text , satisfies
  \begin{align}
    \biggl\| \sum_{k_{j_1} \notin \cone} W_{j_1,k_{j_1}}(t)\, c_{k_{j_1}} \biggr\|
    &\leq C_{\mathrm{LR}}\, \sum_{l=(v+2 v_\epsilon)\,|t|}^\vol |\ball_{l+1}(J) \backslash \ball_{l}(J) |\, \e^{\mu\, ( v\,|t| - l)}   \nonumber  \\
    &\leq 2^{\dlat+1} \, \dlat \, |J| \, C_{\mathrm{LR}}\, \e^{\mu\, v\, |t|} \, \sum_{l=(v+2 v_\epsilon)\,|t|}^\infty l^{\dlat-1} \, \e^{-\mu\,l} \; ,
  \end{align}
  where we have used that the number $|\ball_{l+1}(J) \backslash \ball_{l}(J) |$ of points in the surface of a cone with radius $l$ around $J$ 
  in a cubic lattice is bounded by $4\,\dlat\,|J|\,(2l)^{\dlat-1}$.
  Shifting the limits of the sum then yields
  \begin{align}
    \biggl\| \sum_{k_{j_1} \notin \cone} W_{j_1,k_{j_1}}(t)\, c_{k_{j_1}} \biggr\|
    &\leq \e^{- \mu\, v_\epsilon\, |t|}\, |J|\, 2^{\dlat+1} \, \dlat \, C_{\mathrm{LR}}\, \, 
      \sum_{l=0}^\infty \left(l + (v+2 v_\epsilon) |t| \right)^{\dlat-1} \, \e^{-\mu \, (l + v_\epsilon |t|)} \; .
  \end{align}
  We now define the time independent constant
  \begin{align}
    \tilde{C}_{\mathrm{LR}}(\dlat) \coloneqq \sup_{t \in \mathbb{R}^{+}} 2^{\dlat+1} \, \dlat \, C_{\mathrm{LR}}\, \,
      \sum_{l=0}^\infty \left(l + (v+2 v_\epsilon) |t| \right)^{\dlat-1} \, \e^{-\mu \, (l + v_\epsilon |t|)} \; 
  \end{align}
  which can be written in terms so of the Hurwitz-Lerch-Phi function $\Phi$ (also known as Lerch transcendent)
  \begin{align}
    \tilde{C}_{\mathrm{LR}}(\dlat) \coloneqq \sup_{t \in \mathbb{R}^{+}} 2^{\dlat+1} \, \dlat \, C_{\mathrm{LR}}\, \,
      \Phi(\e^{-\mu},1-\dlat,(v+2 v_\epsilon)|t| )\e^{-\mu v_\epsilon |t|} \; .
  \end{align}
  Inserting the estimate into Eq.~\eqref{eq:LR_bound} and iteratively using the resulting inequality $|J|$-times gives the result as stated.
  
  As argued above, the constant is directly related to the Hurwitz-Lerch-Phi function and can easily be explicitly evaluated for physical
  dimensions $\dlat = 1,2,3$. In one dimension, the constant takes the form
  \begin{align}
    \tilde{C}_{\mathrm{LR}} (1) = 4 \, C_{\mathrm{LR}}\, \frac{1}{1 - \e^{-\mu}} \; .
  \end{align}
\end{proof}

\subsection{Partitions: Tracking indices on the lattice}
\label{deltapartitions}
Using the result of Lemma~\ref{lm:LR} we can restrict the time evolution in Eq.~\eqref{eq:app_LRinitialop} 
to the Lieb-Robinson cone at the cost of an exponentially suppressed error term.
In this section we therefore look at
\begin{align}
  \label{eq:app_operatoroncone}
  A_J^{LR}(t) \coloneqq \sum_{\substack{(k_j)_{j\in J} \in \cone^{\times |J|}}} \quad \biggl( \prod_{j \in J} W_{j,k_j}(t)\, c_{k_j} \biggr) \; ,
\end{align}
the restriction of a term of the form Eq.~\eqref{eq:app_LRinitialop} to the widened Lieb-Robinson cone $\cone = \ball_{(v+ 2 v_\epsilon)|t|}(J)$.
By grouping summands according to how close the respective indices $k_j$ are on the lattice we will rewrite $A_J^{LR}(t)$ as a sum over partitions of the sub-index set $J$.
This will later allow us to factorize certain expectation values using the exponential decay of correlations in the initial state.

We start by introducing some notation.
Given a finite non-empty set $J$, a partition $P$ of $J$ is a set of non-empty subsets (patches) of $J$, whose union is $J$, i.e., $\overline P \coloneqq \bigcup_{p\in P}p  = J$.
We denote by $\pi_m(P) \coloneqq \{p\in P:|p|=m\}$ the subset of all patches in a partition with a given size $m$ and by $\pi_{>m}(P) = \{p\in P:|p|>m\}$ that of all patches with size larger than $m$.
We refer to patches of size two as pairs and patches of size at least four as clusters. 
Partitions will be called even, if all patches in it have an even size.
We further denote by $\partitions(J) = \{P:P \text{ partition of }J\}$ the set of all partitions of $J$, and by $\partitions_m(J)$ and $\partitions_{>m}(J)$ the sets of all partitions into patches of size exactly equal to, or larger than $m$, respectively.
Given two partitions $P,Q$ of the set $J$ we say that $Q$ is a coarsening of $P$ and write $Q > P$ if $\forall p\in P\, \exists q \in Q : p\subset q$ and $P \neq Q$.

Next we introduce the notion of a $\Delta$-partition of the sub-index set $J$.
To each configuration of indices $(k_j)_{j\in J}$, we assign a unique partition $P$ 
of the subindices $j$ such that all indices $(k_j)_{j\in p}$ with subindices that lie within one set $p$ of the partition are connected by a path of steps with maximal length $\Delta$ and all indices $k_j$ corresponding to subindices in two different sets of the partition lie more than a distance $\Delta$ apart. 
\begin{dfn}[$\Delta$-partition]
  Given a distance $\Delta > 0$, a finite set $J\subset\N$, a finite set $M$ equipped with a pseudometric $\dist$, and a sequence of elements $(k_j)_{j\in J}\in M^{\times |J|}$.
  We define the $\Delta$-partition $P_\Delta(J,(k_j)_{j\in J})$ to be the unique partition of $J$ which fulfills  
 \begin{compactenum}[(1)]
 \item Each set in the partition is path connected by hops of length at most $\Delta$ in the sense that \begin{equation} \forall p \in P_\Delta(J,(k_j)_{j\in J}): \forall x,y\in p \exists z_1,\ldots,z_N \in p: x=z_1, y=z_N \wedge \forall i\in[N-1]: \dist(k_{z_i},k_{z_{i+1}}) \leq \Delta . \end{equation}
 \item The different patches in the partition are separated by a distance larger than $\Delta$ in the sense that \begin{equation} \forall p \neq q \in P_\Delta(J,(k_j)_{j\in J}): \forall x\in p, y\in q : \dist(k_x,k_y) > \Delta . \end{equation}
   \end{compactenum}
\end{dfn}

\begin{figure}[tb] 
  \includegraphics[width=0.95\columnwidth]{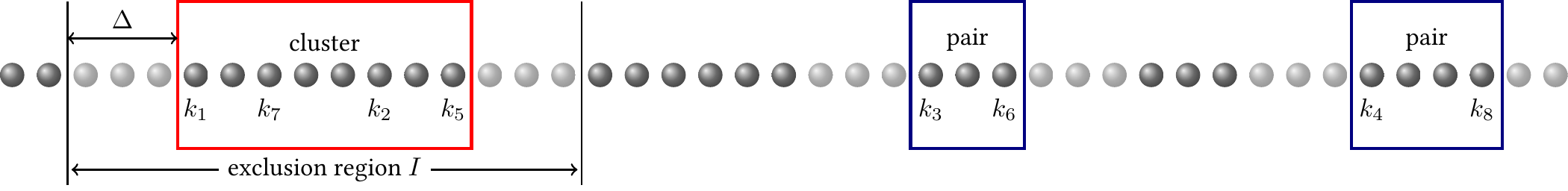}
  \caption[partition]{Illustration of a $\Delta$-partition of a given index configuration with $\Delta=3$ and $J=\{1,\ldots,8\}$. To the above configuration of the indices $k_1, \dots k_8$ we associate the $\Delta$-partition $\{\{1, 2, 5, 7\},\{3, 6\},\{4, 8\}\}$ such that a patch of size four (a cluster) and two patches of size two (pairs) are formed. Around each patch exists an buffer region (shaded nodes) which separates different patches. For a given $\Delta$-partition $P$ we obtain all possible index configurations in $\cstr{M}{P}$ by placing the patches iteratively. A placed patch will hereby create an exclusion region consisting of the patch itself and the buffer region around it in which no further patch can be placed.}
  \label{fig:partition}
\end{figure}

In addition, we define a compact notation for index configurations distributed over the lattice
such that their $\Delta$-partition agrees with a given partition $P$.
\begin{dfn}[Index sets respecting $\Delta$-partitions]
Given a distance $\Delta>0$, the set $[2 \vol]$ of all sites on the lattice equipped with a pseudometric $\dist$, a sub-index set $M \subset [2 \vol]$, and
a partition $P\in\partitions(J)$.
We denote the set of sequences contained in $M$ whose $\Delta$ partitions is equal to $P$ by
\begin{equation}
  \cstr{M}{P} \coloneqq \{(k_j)_{j \in \overline{P}} \in M^{\times |\overline{P}|}:P_\Delta(\overline{P},(k_j)_{j \in \overline{P}})=P\}.
\end{equation}
\end{dfn}
  
The notation introduced above allows us to rewrite the sum over the indices $(k_j)_{j_\in J}$ in Eq.~\eqref{eq:app_operatoroncone} inside the cone by sorting them according to their associated $\Delta$-partition.
\begin{align}
  \label{eq:app_patches}
  A_J^{LR}(t) = \sum_{\substack{(k_j)_{j\in J} \in \cone^{\times |J|}}} \quad \biggl( \prod_{j \in J} W_{j,k_j}(t)\, c_{k_j} \biggr)
  &= \sum \limits_{P\in\partitions(J)} \; \sum\limits_{(k_j)_{j\in J}\in\cstr{\cone}{P}} \;
    \prod \limits_{j\in J}W_{j,k_j}(t)\,c_{k_j}
       \nonumber\\
  &= \sum\limits_{P\in\partitions(J)} \sign(P) \sum \limits_{(k_j)_{j\in J} \in \cstr{\cone}{P}} \prod\limits_{p\in P}\patch{p},
\end{align}
where for each $p \in P$ we have introduced a patch operator $\patch{p}$, defined by
\begin{align}
  \label{eq:defpatches}
  \patch{p} = \prod\limits_{j \in p}W_{j,k_j}(t)c_{k_j} \; .
\end{align}
The $\sign(P)$ denotes the sign picked up from reordering the fermionic basis operators into the corresponding patches, where keeping the relative order of the operators
inside each patch fix.
  
\subsection{Factorizing expectation values}
\label{decoupling}
In the last section, we have developed the formalism to group indices on the lattice according to their distribution on the lattice and introduced the patch operators $\patch{p}$.
We now use the exponential clustering of correlations in the initial state to show that expectation values of products of such patch operators can be factorized into a product of expectation values of the individual patch operators up to a small error.

\begin{lm}[Factorizing expectation values in states with exponential clustering of correlations]
  \label{lm:decoupling}
  Let $\rho$ be a state that exhibits exponential clustering of correlations as defined in Definition 1 
  with system size independent parameters $\Cclust,\xi>0$.
  Let $J \subset [2 \vol]$ and $P\in\partitions(J)$ be a partition of $J$.
  Then for any distance $\Delta > 0$ it holds that
  \begin{equation}
    \sum\limits_{(k_j)_{j \in J}\in\cstr{\cone}{P}}\left| \Expv{\prod_{p \in P} \patch{p} } - \prod_{p \in P} \expv{\patch{p}}  \right| 
    \leq |J|^3\,\Cclust\, |\cone|^{|J|}\e^{-\Delta/\xi} \; .
  \end{equation}
\end{lm}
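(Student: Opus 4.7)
The plan is to bound, for each admissible configuration $(k_j)_{j\in J}\in\cstr{\cone}{P}$, the pointwise difference
\[
\Bigl|\Expv{\textstyle\prod_{p\in P}\patch{p}} - \prod_{p\in P}\Expv{\patch{p}}\Bigr|
\]
by a telescoping argument that peels patches off one by one, and then to sum the resulting estimate trivially over all such configurations. The key geometric input is the defining property of $\cstr{\cone}{P}$: any two indices $k_j,k_{j'}$ with $j,j'$ in different patches of $P$ satisfy $\dist(k_j,k_{j'})>\Delta$, so each application of the clustering hypothesis of Definition~1 to operators supported on different patches produces the desired suppression $e^{-\Delta/\xi}$.

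First I would factor scalars out of each patch by writing $\patch{p} = w_p\, q_p$ with $w_p\coloneqq\prod_{j\in p}W_{j,k_j}(t)$ and $q_p\coloneqq\prod_{j\in p}c_{k_j}$. Unitarity of $W(t)$ gives $|W_{j,k_j}(t)|\leq 1$, hence $|w_p|\leq 1$ and $\|q_p\|\leq 1$, and the pointwise difference reduces to $\prod_{j\in J}|W_{j,k_j}(t)|$ times the operator-only difference $\bigl|\Expv{\prod_{p\in P} q_p}-\prod_{p\in P}\Expv{q_p}\bigr|$. Ordering the patches as $p_1,\dots,p_n$ with $n\leq|J|$, I would then invoke the standard telescoping identity
\[
\Expv{\textstyle\prod_i q_{p_i}} - \prod_i \Expv{q_{p_i}} = \sum_{k=1}^{n-1}\Bigl(\prod_{i<k}\Expv{q_{p_i}}\Bigr)\Bigl[\Expv{q_{p_k}\textstyle\prod_{i>k}q_{p_i}} - \Expv{q_{p_k}}\Expv{\textstyle\prod_{i>k}q_{p_i}}\Bigr],
\]
in which every prefactor is bounded in absolute value by $1$, and the bracketed $k$-th term is bounded via Definition~1 (applied to $q_{p_k}$ and $\prod_{i>k} q_{p_i}$, which have supports of sizes $|p_k|$ and $\sum_{i>k}|p_i|$ and are separated by more than $\Delta$) by $\Cclust\, |p_k|\sum_{i>k}|p_i|\,e^{-\Delta/\xi}$. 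The crude estimates $|p_k|,\sum_{i>k}|p_i|\leq|J|$ and $n-1\leq|J|$ yield the pointwise bound $|J|^3\,\Cclust\, e^{-\Delta/\xi}\,\prod_{j\in J}|W_{j,k_j}(t)|$.

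Finally, I would sum over $(k_j)_{j\in J}\in\cstr{\cone}{P}\subset\cone^{\times|J|}$ and use $|W_{j,k_j}(t)|\leq 1$ to obtain
\[
\sum_{(k_j)\in\cstr{\cone}{P}}\prod_{j\in J}|W_{j,k_j}(t)| \leq \prod_{j\in J}\sum_{k_j\in\cone}1 = |\cone|^{|J|},
\]
which combined with the pointwise estimate yields exactly the bound claimed in the lemma. The main technical subtleties are (i) carefully tracking the fermionic anticommutation signs arising when rearranging the operators $c_{k_j}$ to group them by patch (these are harmless for absolute values but must be consistent across the telescoping), and (ii) verifying that Definition~1, which is stated only for a pair of operators of unit norm, applies cleanly when one iteratively peels a single patch off a growing ``tail'' supported on the remaining patches --- in particular that the tail stays at distance at least $\Delta$ from the peeled patch, which follows directly from the $\Delta$-partition condition. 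Beyond this bookkeeping, the proof is a direct combination of the clustering hypothesis with the $\Delta$-separation property built into $\cstr{\cone}{P}$.
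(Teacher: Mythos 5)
Your proposal is correct and follows essentially the same route as the paper: the paper's proof also peels off one patch at a time (your telescoping sum is just the unrolled form of its iteration), bounds each step by $|p_1|\,|\overline{P}\setminus p_1|\,\Cclust\,\e^{-\Delta/\xi}\leq |J|^2\Cclust\e^{-\Delta/\xi}$ using the $\Delta$-separation of patches, iterates at most $|J|$ times, and then multiplies by $|\cstr{\cone}{P}|\leq|\cone|^{|J|}$. The only cosmetic difference is that you factor the $W$-entries out as scalars bounded by $1$ pointwise, whereas the paper invokes its norm lemma for the patch operators; both yield the identical bound.
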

\begin{proof}
  We begin by factorizing out the contribution from the first patch $p_1 \in P$.
  Using Lemma~\ref{lm:partial_sum} and the exponential clustering of the initial state, 
  we find for a given $(k_j)_{j\in J}\in\cstr{\cone}{P}$
  \begin{equation}
    \left| \Expv{ \prod_{p \in P} \patch{p} } - \expv{\patch{p_1}} \Expv{ \prod_{p \in P\setminus\{p_1\}} \patch{p} } \right| \leq |p_1|\,|\overline{P}\setminus p_1|  \,\Cclust\, \e^{-\Delta/\xi} \; .
  \end{equation}
  Using the trivial bound $|p_1|\,|\overline{P}\setminus p_1| \leq |J|^2$, iterating the step above $|P|\leq |J|$ times and using that $|\cstr{\cone}{P}|\leq |\cone|^{|J|}$ yields the result as stated.
\end{proof}

From Lemma~\ref{lm:clusteringofcorrelationsgauss} follows that the same theorem applies to the Gaussified version of a state exhibiting clustering of correlations if we allow $\Cclust$ to scale with the size of the support according to $\Cclust \rightarrow \Cclust 4^{|J|}|J|^{|J|}$. 

\subsection{Suppression of non-Gaussian contributions}
\label{suppression}
In the last two sections we have bounded the error made in approximating expectation values of terms of the from given in Eq.~\eqref{eq:app_LRinitialop} by certain sums of products of expectation values of patch operators.
This allows us to bound the difference between the left and right hand side of
\begin{equation}
  \tr[A(t)\,\rho) - \tr(A(t)\,\rho_G] 
  \approx \sum_{J\subset \subSysInds} a(J)\, \sum\limits_{P\in\partitions(J)} \sign(P)
  \sum \limits_{(k_j)_{j \in J}\in\cstr{\cone}{P}} 
  \left ( \prod_{p \in P} \expv{\patch{p}} - \prod_{p \in P} \expvG{\patch{p}} \right ) .
\end{equation}
It is obvious that to the right hand side only partitions $P$ in which all patches $p$ are of even size can contribute, as $\rho$ and $\rho_G$ have an even particle number parity.
Moreover, as the second moments of $\rho$ and $\rho_G$ are equal by definition, the difference between the products also vanishes whenever all patches have size exactly $2$.
Hence, every contributing term contains at least one patch of size at least $4$.
In the remainder of this section we now bound the contribution of such partitions to the right hand side of the above equation.
The combinatorial nature of the problem makes this a tedious endeavor.
The final result is summarized in the following lemma, which is the last result that we need before we can assemble all the parts of the proof in Section~\ref{fullproof}.

\begin{lm}[Bounding contributions from partitions that contain large patches]
  \label{lm:bounding_clusters}
  Let $\rho$ be a state exhibiting exponential clustering of correlations as defined in Definition 1 
  with system size independent parameters $\Cclust,\xi>0$ and $\rho_G$ its Gaussified version.
  Let $\Delta \geq 1$, $J\subset[2V]$, and $P\in\partitions(J)$ be an even partition that contains 
  a patch of size at least four (cluster), and $m \coloneqq |\pi_{2}(P)|$ many patches of size two (pairs).
  Given that the time evolution of the system is governed by a Hamiltonian showing delocalizing transport as defined in Definition 2 with parameters $\Chom$, $\exphom$, for all $t\in(0,\min(\trec,V)]$ and $\sigma \in \{\rho,\rho_G\}$ it holds that
  \begin{align}
    R_{P}(J,m,t) \coloneqq{}& \left | \sum \limits_{(k_j)_{j \in J}\in\cstr{\cone}{P}} \prod_{p \in P} \expvS{\patch{p}} \right | \\
    \leq{}& (|\cone|^{1/4} \Chom t^{-\exphom} 2^{2\dlat} |J|^{\dlat} \Delta^{\dlat} )^{|J|-2m} \nonumber \\
    \times{}&\left[ (1+\Cclust |\cone|^2 \e^{-\Delta/\xi}) +
    2^{2\dlat |J|} |J|^{(\dlat + 1)|J|+1} \sum\limits_{r=0}^{|J|/2} (|\cone|^{1/4} \Chom t^{-\exphom} \Delta^{\dlat})^{2r+4} \right]^m \; . \nonumber
  \end{align}
\end{lm}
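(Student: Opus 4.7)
The plan is to split $\prod_{p \in P} \expvS{\patch{p}}$ into a product over clusters $p \in \pi_{>2}(P)$ and a product over pairs $p \in \pi_2(P)$, and then to bound the sum over $\cstr{\cone}{P}$ by bounding the cluster and pair groups separately. For the cluster contribution, I would exploit delocalizing transport of Definition~2: since all indices lie in $\cone$ and $t \leq V$, each propagator element satisfies $|W_{j,k_j}(t)| \leq \Chom t^{-\exphom}$. Combined with the trivial bound $\|\prod_{j\in p} c_{k_j}\| \leq 1$, this gives $|\expvS{\patch{p}}| \leq \Chom^{|p|} t^{-\exphom|p|}$ for every cluster $p$. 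To sum over configurations I would count placements: a root index may sit anywhere in $\cone$, while each of the remaining $|p|-1$ indices must lie in a $\Delta$-ball of some previously placed index, giving at most $|\cone|\cdot(|p|\,(2\Delta+1)^{\dlat})^{|p|-1}$ admissible placements per cluster. Amortizing the single $|\cone|$ factor across the $\geq 4$ indices of each cluster (hence the fourth root $|\cone|^{1/4}$ per index) and multiplying over all clusters yields the claimed factor $(|\cone|^{1/4}\Chom t^{-\exphom} 2^{2\dlat}|J|^{\dlat} \Delta^{\dlat})^{|J|-2m}$.

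The pair contribution is harder because pair positions are coupled through the $\Delta$-separation constraint in $\cstr{\cone}{P}$. I would proceed by induction on $m$, peeling off one pair $p_\star$ at a time and splitting the sum over its two indices into two cases. In case (i), $p_\star$ is $\Delta$-separated from every other patch; then exponential clustering of correlations --- via Lemma~\ref{lm:decoupling} for $\sigma = \rho$ and Lemma~\ref{lm:clusteringofcorrelationsgauss} for $\sigma = \rho_G$ --- allows me to factor $\expvS{\patch{p_\star}}$ out of the joint expectation at the cost of an exponentially small correction $\Cclust|\cone|^2 \e^{-\Delta/\xi}$ (with $|\cone|^2$ counting the placements of the two indices of $p_\star$). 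The remaining sum over these two indices then collapses to at most $1$ by Lemma~\ref{lm:partial_sum}, producing the leading $(1+\Cclust|\cone|^2\e^{-\Delta/\xi})$ factor per pair.

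In case (ii), $p_\star$ lies within $\Delta$ of $r\geq 0$ additional pairs, forming a merged object spanning $2r+4$ indices (two from $p_\star$, two from its first merge partner, and $2r$ from further absorbed pairs; a merge with a cluster is absorbed into the cluster-counting step). Applying the same transport estimate and configuration-counting argument as in the cluster case, each of the $2r+4$ merged indices contributes a factor $|\cone|^{1/4}\Chom t^{-\exphom}\Delta^{\dlat}$, with the $\Delta^{\dlat}$ reflecting that each absorbed index must lie in a radius-$\Delta$ ball of a previously placed one. Summing over $r$ produces the $\sum_{r=0}^{|J|/2}(|\cone|^{1/4}\Chom t^{-\exphom}\Delta^{\dlat})^{2r+4}$ contribution; the prefactor $2^{2\dlat|J|}|J|^{(\dlat+1)|J|+1}$ absorbs the combinatorial choices of which pairs get absorbed and how they are chained into a spanning tree, together with the factorial enhancement arising from the Wick expansion in Lemma~\ref{lm:clusteringofcorrelationsgauss} for the Gaussian case.

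The main obstacle is the combinatorial bookkeeping in case (ii): ensuring that configurations are not overcounted across induction steps, and that the per-pair split into cases (i) and (ii) telescopes cleanly into the advertised product $[\cdots]^m$. A deterministic merging rule --- for example, always absorbing the patch whose closest index is lexicographically smallest --- together with a union bound over $r$ should suffice. A more subtle point is that the fourth-root bookkeeping $|\cone|^{1/4}$ per index, already introduced for the clusters, is exactly what allows the per-pair bound in case (ii) and the per-cluster bound to be uniformly expressed via the same elementary building block; this uniformity is what yields the clean factored form stated in the lemma.
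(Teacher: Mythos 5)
Your proposal follows essentially the same route as the paper: fix the cluster indices and bound them via delocalizing transport together with a placement count that amortizes one factor of $|\cone|$ over the $\geq 4$ indices of each cluster (hence $|\cone|^{1/4}$ per index), then handle the pairs by a recursion in which the unconstrained sum is controlled by Lemma~\ref{lm:partial_sum} plus clustering (yielding the $1+\Cclust|\cone|^2\e^{-\Delta/\xi}$ factor) and the constraint-violation error is recaptured as merged patches of size $2r+4$ bounded again by transport, exactly as in the paper's coarsening argument. One small framing slip: inside $R_P$ the expectation is already a product over patches, so clustering is not needed to decouple $p_\star$ from the other patches; in the paper it enters only to bound the completion of the restricted sum over a single pair's two indices to the unrestricted one, but this does not change the resulting bound.
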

\begin{proof}
  In order to prove the above bound we separate the partition into a part containing only pairs and one containing the rest
  \begin{align}
    P &= \pi_2(P) \cup \pi_{>2}(P) \; .
  \end{align}
  For every given fixed position of the indices corresponding to clusters in $\pi_{>2}(P)$, the indices corresponding to the pairs are restricted to the set $K \coloneqq \cone\setminus\ball_\Delta(\{k_j\}_{j\in\overline{\pi_{>2}(P)}})$, as all patches are separated by a distance larger than $\Delta$.
  Thus, we can write 
  \begin{align}
    \label{eq:fullRinproof}
    R_{P}(J,m,t) = \left| \sum\limits_{\substack{(k_{j})_{j\in \overline{\pi_{>2}(P)}}\in \cstr{\cone}{\pi_{>2}(P)}}}
    \left[ \left( \prod\limits_{p\in \pi_{>2}(P)}\expvS{\patch{p}} \right)	
    \sum\limits_{\substack{(k_{j})_{j\in\overline{\pi_2(P)}}\in\cstr{K}{\pi_2(P)},\\
    K \coloneqq \cone\setminus\ball_\Delta(\{k_j\}_{j\in\overline{\pi_{>2}(P)}})}} 
    \prod\limits_{p\in\pi_2(P)}\expvS{\patch{p}} \right] \right| \; .
  \end{align}
  The homogeneous suppression due to delocalizing transport for $t\in(0,\min(\trec,V)]$ and $|\overline{\pi_{>2}(P)}| = |J|-2m$ imply
  \begin{align}
   \left\|\prod\limits_{p\in \pi_{>2}(P)}\patch{p}\right\| \leq (\Chom t^{-\exphom})^{|J|-2 m}  \; .
  \end{align}
  Using this and the triangle inequality for the first sum in Eq.~\eqref{eq:fullRinproof} we arrive at
  \begin{align}
    R_{P}(J,m,t) 
    \leq (\Chom t^{-\exphom})^{|J|-2 m} \sum\limits_{\substack{(k_{j})_{j\in \overline{\pi_{>2}(P)}}\in \cstr{\cone}{\pi_{>2}(P)}}}
    \left|\sum\limits_{\substack{(k_{j})_{j\in\overline{\pi_2(P)}}\in\cstr{K}{\pi_2(P)},\\K=\cone\setminus\ball_\Delta(\{k_j\}_{j\in\overline{\pi_{>2}(P)}})}} 
    \prod\limits_{p\in\pi_2(P)}\expvS{\patch{p}} \right| \; .
  \end{align}
  At this point, the inner sum still depends on the position of the clusters, as they create an exclusion region for the pairs (see Fig.~\ref{fig:partition}).
  Taking the supremum over such exclusion regions decouples the sum. 
  Then bounding the possible number of positions of the $|J|-2m$ indices in the $|\pi_{>2}(P)|$ clusters by
  \begin{align}
    |\cstr{\cone}{\pi_{>2}(P)}| \leq 
    |\cone|^{|\pi_{>2}P|} (2^{2\dlat} |J|^{\dlat} \Delta^{\dlat})^{|J| - 2 m} \; ,
  \end{align}
  gives
  \begin{align}
    R_{P}(J,m,t) 
    &\leq |\cone|^{|\pi_{>2}(P)|} (\Chom t^{-\exphom} 2^{2\dlat} |J|^{\dlat} \Delta^{\dlat})^{|J|-2m}\max\limits_{I\subset\cone}\left|
      \sum\limits_{\substack{(k_{j})_{j\in\overline{\pi_2(P)}}\in\cstr{\cone\setminus I}{\pi_2(P)}}} 
      \prod\limits_{p\in\pi_2(P)}\expvS{\patch{p}}	\right| \\
    &\leq(|\cone|^{1/4} \Chom t^{-\exphom} 2^{2\dlat} |J|^{\dlat} \Delta^{\dlat} )^{|J|-2m}
      \max\limits_{M\subset J: |M|=2m}\max\limits_{\substack{I\subset\cone,\\P\in\partitions_2(M)}}\left|
      \sum\limits_{\substack{(k_{j})_{j\in M}\in\cstr{\cone\setminus I}{P}}} 
      \prod\limits_{p\in P}\expvS{\patch{p}}	\right|.
  \end{align}
  We now define
  \begin{equation}
    f(m,t) \coloneqq \max\limits_{M\subset J: |M|=2m}\max\limits_{\substack{I\subset\cone,\\P\in\partitions_2(M)}}\left|
      \sum\limits_{\substack{(k_{j})_{j\in M}\in\cstr{\cone\setminus I}{P}}} 
      \prod\limits_{p\in P}\expvS{\patch{p}}	\right|
  \end{equation}
  and apply an recursive argument to achieve a bound of the form
  \begin{align}
    f(m,t) \leq C(m,t) \; .
  \end{align}
  \paragraph{Start of recursion:}
  For the case of one pair ($m=1$), by using Lemma~\ref{lm:partial_sum}, we can bound the appearing maximum as follows
  \begin{align}
      f(1,t) =
      &\max\limits_{M\subset J: |M|=2}\max\limits_{\substack{I\subset\cone}}
        \left| \sum\limits_{\substack{(k_j)_{j\in M}\in\cstr{\cone\setminus I}{\{M\}}}} \expvS{\patch{M}}	\right|
      = \max\limits_{M\subset J: |M|=2}\max\limits_{\substack{I\subset\cone}}
        \left| \sum\limits_{\substack{k_{l_1},k_{l_2}\in\cone\setminus I, \\ \{l_1,l_2\}=M: \\ \dist(k_{l_1},k_{l_2})\leq \Delta}} \expvS{\patch{M}}	\right| \\
      \leq&
        \max\limits_{M\subset J: |M|=2}\max\limits_{\substack{I\subset\cone}}\left( 
        \left|\sum\limits_{\substack{k_{l_1},k_{l_2}\in\cone\setminus I, \\ \{l_1,l_2\}=M}} \expvS{\patch{M}}	\right| 
        +\left|  \sum\limits_{\substack{k_{l_1},k_{l_2}\in\cone\setminus I, \\ \{l_1,l_2\}=M: \\ \dist(k_{l_1},k_{l_2})> \Delta}} \expvS{\patch{M}}\right| \right)
           \nonumber\\
      \leq& \max\limits_{M\subset J: |M|=2}\max\limits_{\substack{I\subset\cone}}(1 + |\cone|^2\Cclust \e^{-\Delta/\xi}) \\
      =& 1 + |\cone|^2\Cclust \e^{-\Delta/\xi}\;. % \eqqcolon C(1,t) \; .
      \label{eq:rec_start_1}
    \end{align}
    With this, we can move to setting up the recursion.
    \paragraph{Setting up the recursion:} To obtain a recursion formula for an upper bound $C(m, t)$ on $f(m,t)$ we now relax the condition that different pairs may not occupy close-by lattice regions.
  If we drop this constraint, the only remaining constraint is that paired indices $k_l$ and $k_{l^\prime}$ lie close to each other, i.e. $\dist(k_{l},k_{l^\prime})\leq\Delta$.
  For any set $M\subset J$ with $|M|=2 m$, set $I\subset\cone$, $P\in\partitions
_2(M)$ Eq.~\eqref{eq:rec_start_1} yields directly
  \begin{align}
    \left|\prod\limits_{p\in P}\sum\limits_{\substack{(k_{j})_{j\in p}\in\cstr{\cone\setminus I}{\{p\}}}}\expvS{\patch{p}}\right|
    \leq(1+\Cclust|\cone|^2\e^{-\Delta/\xi})^m.
  \end{align}
  This leaves us with controlling the difference between the constrained and unconstrained pairs and yields
  \begin{align}
    f(m,t) &\leq 
    \left (1 + |\cone|^2\Cclust \e^{-\Delta/\xi} \right)^m 
    + \max\limits_{M\subset J: |M|=2m}\max\limits_{\substack{I\subset\cone,\\P\in\partitions_2(M)}}
    g(M,P,I) , \\
    g(M,P,I) &\coloneqq 
    \left|\sum\limits_{\substack{(k_j)_{j\in M}\in\cstr{\cone\setminus I}{P}}}\prod\limits_{p\in P}\expvS{\patch{p}}
    -\prod\limits_{p\in P}\sum\limits_{\substack{(k_j)_{j\in p}\in\cstr{\cone\setminus I}{\{p\}}}}\expvS{\patch{p}}\right|	 .
  \end{align}
  In order to get the error under control, we need the notion of a coarsening of a partition defined above \cite{lint}. 
  The key insight here is that the above difference between constrained and unconstrained pairs can be captured by considering all possible coarsening of the partition $P$
  \begin{align}
    g(M,P,I) 
    &= \left|\sum\limits_{\substack{Q\in\partitions(M):\\Q>P}}\sum\limits_{\substack{(k_{j})_{j\in M}\in
      \cstr{\cone\setminus I}{Q}:\\\forall \{l,l^\prime\}\in P:\dist(k_{l},k_{l^\prime})\leq\Delta}}\prod\limits_{p\in P}\expvS{\patch{p}}\right|
    \leq 
      \sum\limits_{w=0}^{m-2} 
      \sum\limits_{\substack{Q\in\partitions(M):\\Q>P, |\pi_2(Q)|=w}} 
      \left| \sum\limits_{\substack{(k_{j})_{j\in M}\in\cstr{\cone\setminus I}{Q}:\\
      \forall \{l,l^\prime\}\in P:\dist(k_{l},k_{l^\prime})\leq\Delta}}\prod\limits_{p\in P}\expvS{\patch{p}} \right|,
  \end{align}
  where we have applied the triangle inequality and sorted the coarsenings by the numbers of pairs $w$ they have.
  This is almost the original expression $R_{P}(J,m,t)$ which this lemma is trying to bound, 
  with the exception that there is still a signature left of the fact that the clusters
  were created by joining pairs, such that consecutive indices have to be at most distance $\Delta$ apart in the cluster.
  Following the same steps as above, one can show that
  \begin{align}
    g(M,P,I) 
    &\leq
    \sum\limits_{w=0}^{m-2} 
    \sum\limits_{\substack{Q\in\partitions(M):\\Q>P, |\pi_2(Q)|=w}} 
    \left|\sum\limits_{\substack{(k_{j})_{j\in M}\in\cstr{\cone\setminus I}{Q}: \nonumber \\
    \forall \{l,l^\prime\}\in P:\dist(k_{l},k_{l^\prime})\leq\Delta}}\prod\limits_{p\in P}\expvS{\patch{p}}\right|  \nonumber \\
    &\leq
    \sum\limits_{w=0}^{m-2} 
    \sum\limits_{\substack{Q\in\partitions(M):\\Q>P, |\pi_2(Q)|=w}} 
    (\Chom t^{-\exphom})^{2m - 2w} \sum\limits_{\substack{(k_{j})_{j\in \overline{\pi_{>2}(P)}}\in \cstr{\cone}{\pi_{>2}(P)}  \nonumber\\
    \forall \{l,l^\prime\}\in P:\dist(k_{l},k_{l^\prime})\leq\Delta} }
    \left|\sum\limits_{\substack{(k_{j})_{j\in\overline{\pi_2(P)}}\in\cstr{K}{\pi_2(P)},\\K=\cone\setminus\ball_\Delta(\{k_j\}_{j\in\overline{\pi_{>2}(P)}})}} 
    \prod\limits_{p\in\pi_2(P)}\expvS{\patch{p}}	\right]  \nonumber \\
    &\leq \sum\limits_{w=0}^{m-2} 
    (2m)^{2m}
    (|\cone|^{1/4} \Chom t^{-\exphom} 2^{2\dlat} |J|^{\dlat} \Delta^{\dlat} )^{2m-2w} f(w,t)
    \; ,
  \end{align}
  where we used that for a finite set $M$, we can upper bound the number of partitions of this set by $|\partitions(M)|\leq |M|^{|M|}$.
  Thus we obtain for the function $f(m,t)$ the following upper bound
  \begin{align}
    f(m,t) 
    &\leq \left (1 + |\cone|^2\Cclust \e^{-\Delta/\xi} \right)^m 
      + \sum\limits_{w=0}^{m-2} |J|^{|J|}
      (|\cone|^{1/4} \Chom t^{-\exphom} 2^{2\dlat} |J|^{\dlat} \Delta^{\dlat} )^{2m-2w} f(w,t) \\
    &\leq \left (1 + |\cone|^2\Cclust \e^{-\Delta/\xi} \right)^m \nonumber 
    + 2^{2\dlat|J|} |J|^{ (\dlat + 1)|J|}\sum\limits_{w=0}^{m-2} 
    (|\cone|^{1/4} \Chom t^{-\exphom} \Delta^{\dlat} )^{2m-2w} f(w,t) \\
    &\leq \left (1 + |\cone|^2\Cclust \e^{-\Delta/\xi} \right)^m  \nonumber
    + 2^{2\dlat|J|} |J|^{(\dlat + 1)|J| } 
    \sum\limits_{r=0}^{|J|/2} (|\cone|^{1/4} \Chom t^{-\exphom} \Delta^{\dlat})^{2r+4}
    \sum\limits_{w=0}^{m-2} f(w,t) \\
    &\leq \alpha^m + \delta \sum\limits_{w=0}^{m-1} C(w,t) \eqqcolon C(m,t) \nonumber
    \;,
  \end{align}
  where we overestimated by introducing the sum over $r$ and adding the $m-1$ term to the second sum 
  and have introduced the abbreviations
  \begin{align}
   \alpha &= (1+\Cclust |\cone|^2 \e^{-\Delta/\xi}) ,\\
   \delta &= 2^{2\dlat |J|} |J|^{ (\dlat + 1)|J|} 
    \sum\limits_{r=0}^{|J|/2} (|\cone|^{1/4} \Chom t^{-\exphom} \Delta^{\dlat})^{2r+4} \; .
  \end{align}
  By now setting
  \begin{align}
    C(0,t) &\coloneqq f(0,t) = 1 ,\\
    C(1,t) &\coloneqq \alpha+\delta \geq \alpha \geq f(1,t) ,\\
    \forall m \geq 1: \quad C(m,t) &= \alpha^m + \delta \sum\limits_{w=0}^{m-1} C(w,t) ,
  \end{align}
  we have a recursively defined upper bound $C(m,t)$ on $f(m,t)$.
  \paragraph{Solving the recursion:}
  To resolve the recursion, we first show that 
  \begin{align}
    \alpha C(m,t) \leq C(m+1,t) \; ,
  \end{align}
  by relying on an induction. To begin with, we have
  \begin{align}
    \alpha\,C(0,t) = \alpha  \leq \alpha + \delta = C(1,t) \;.
  \end{align}
  For the induction step, we use
  \begin{align}
    \alpha\,C(m,t) &= \alpha^{m+1} + \delta \sum_{w=0}^{m-1} \alpha\,C(w,t)\\
    &\leq \alpha^{m+1} + \delta \sum_{w=0}^{m-1} C(w+1,t) \\
    &\leq \alpha^{m+1} + \delta \sum_{w=0}^{m} C(w,t) = C(m+1,t) \; ,
  \end{align}
  where we used the induction when moving from the first to the second line by relying on the 
  fact that the sum only goes until $w=m-1$ and we added $\delta C(0,t)>0$ in the last line.
  From this, we immediately know that $C(m,t)$ is monotonically increasing as a function of $m$, since
  $\alpha \geq 1$.
  This implies
  \begin{align}
    C(m,t) \leq \alpha\,C(m,t) \leq C(m+1,t) \; . 
  \end{align}
  This allows us to resolve the recursion by iteratively using this estimate as follows
  \begin{align}
    C(m,t) &= \alpha^m + \delta \sum\limits_{w=0}^{m-1} C(w,t)\\
          &\leq \alpha^m + (m-1) \delta C(m-1,t)   \nonumber\\
          &\leq \sum_{j=0}^m \alpha^{m-j} \delta^{j} \frac{m!}{(m-j)!}   \nonumber\\
          &\leq \sum_{j=0}^m \alpha^{m-j} \delta^{j} \frac{m!}{(m-j)!} \frac{|J|^j}{j!}   \nonumber\\
          &= (\alpha + |J| \delta)^m . \nonumber
  \end{align}
  For $f(m,t)$, we hence obtain
  \begin{align}
    f(m,t) &\leq (\alpha + |J| \delta)^m \\
    &= \left[ (1+\Cclust |\cone|^2 \e^{-\Delta/\xi}) +
   |J| 2^{2\dlat |J|} |J|^{(\dlat + 1)|J|} \sum\limits_{r=0}^{|J|/2} (|\cone|^{1/4} \Chom t^{-\exphom} \Delta^{\dlat})^{2r+4} \right]^m \nonumber
  \end{align}
  and for the original quantity considered in this lemma, this yields
  \begin{align}
    R_{P}(J,m,t) 
    &\leq (|\cone|^{1/4} \Chom t^{-\exphom} 2^{2\dlat} |J|^{\dlat} \Delta^{\dlat} )^{|J|-2m} \\
    &\times \left[ (1+\Cclust |\cone|^2 \e^{-\Delta/\xi}) +
   |J| 2^{2\dlat |J|} |J|^{(\dlat + 1)|J|} \sum\limits_{r=0}^{|J|/2} (|\cone|^{1/4} \Chom t^{-\exphom} \Delta^{\dlat})^{2r+4} \right]^m \; ,\nonumber
  \end{align}
  which concludes the proof.
\end{proof}

\subsection{Overview of the proof of Theorem 1}
\label{fullproof}

Collecting all the results of the preceding sub-sections, we can now proof the following result, which directly implies theorem 1 in the main text.
\begin{thm*}[Gaussification in finite time]
  \label{thm:gauss_app}
  Let $\Cclust,\xi,\Chom >0$.
  Consider a family of systems on $\dlat$-dimensional cubic lattices of increasing volume $\vol$ 
  and let $S$ be some fixed finite region of sites and $\exphom>\dlat/4$.
  Let the corresponding initial states exhibit 
  exponential clustering of correlations with constant $\Cclust$ and correlation length $\xi$.
  Let the Hamiltonians of these systems be quadratic finite range and let them exhibit delocalizing transport 
  with constants $\Chom$ and $\exphom$ and a recurrence time $\trec$
  increasing unboundedly as some function of the volume $\vol$.
  Then for any $\epsilon>0$ 
  there exists a relaxation time $\trelax > 0$ independent of the system size
  such that for all $t \in [\trelax,\,\trec]$ it holds 
  that $\| \rho^S(t) - \rho^S_G(t) \|_1 \leq \epsilon$.
\end{thm*}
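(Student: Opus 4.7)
The plan is to reduce the trace-distance statement to the single-operator bound and then assemble the pieces developed in Sections~\ref{app_proof_lieb_robinson_cone}--\ref{suppression}. Since $S$ is fixed and finite, the local algebra is finite dimensional, and
\begin{equation}
  \| \rho^S(t) - \rho_G^S(t) \|_1 = \sup_{\substack{A = A^\dagger,\, \|A\|\leq 1\\ \supp(A) \subset S}} |\tr[A(\rho(t) - \rho_G(t))]|,
\end{equation}
so it suffices to bound the right-hand side uniformly in such $A$. First I would expand $A$ in the fermionic basis as in Eq.~\eqref{eq:decomposeA}. Because the coefficients satisfy $|a(J)|\leq 1$ and there are only $2^{2|S|}$ subsets $J \subset \subSysInds$, the bound I need is on $|\langle A_J(t)\rangle_\rho - \langle A_J(t)\rangle_{\rho_G}|$ for each $J$, with constants that depend only on $|S|$ and not on $\vol$.

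For each fixed $J$, the next step is to substitute Eq.~\eqref{eq:app_LRinitialop} and apply Lemma~\ref{lm:LR} to truncate the sum over lattice indices $(k_j)_{j \in J}$ to the widened Lieb--Robinson cone $\cone = \ball_{(v+2v_\epsilon)|t|}(J)$, at the price of an error decaying as $\e^{-\mu v_\epsilon |t|}$. I then reorganize the surviving sum by its $\Delta$-partition structure as in Eq.~\eqref{eq:app_patches}, obtaining a sum over partitions $P$ of products of patch operators $\patch{p}$. Applying Lemma~\ref{lm:decoupling} (and its Gaussian analogue, available because by Lemma~\ref{lm:clusteringofcorrelationsgauss} the Gaussified state inherits clustering up to a combinatorial prefactor) lets me replace $\langle \prod_p \patch{p}\rangle_\sigma$ by $\prod_p \langle\patch{p}\rangle_\sigma$ for both $\sigma = \rho$ and $\sigma=\rho_G$, with an error of order $|J|^3 \Cclust |\cone|^{|J|} \e^{-\Delta/\xi}$. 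At this point two cancellations take effect: odd-sized patches give zero by parity, and even partitions consisting entirely of pairs produce identical values in $\rho$ and $\rho_G$ since the two states share second moments; hence only partitions containing at least one cluster (patch of size $\geq 4$) survive in the difference.

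Those surviving contributions are exactly what Lemma~\ref{lm:bounding_clusters} controls: each cluster yields a factor $\Chom t^{-\exphom}$ per lattice index within it, and for a patch of size four this amounts to the decisive suppression $t^{-4\exphom}$. Combining this with the cone volume $|\cone| = O(t^{\dlat})$ and the combinatorial factors from $|\cstr{\cone}{P}|$, the contribution from any partition with a cluster is bounded, up to a polynomial prefactor in $\Delta$ and $|J|$, by $t^{-4\exphom + \dlat}$ times corrections. Choosing $\Delta = \max(1, t^{\nu/(4\dlat)})$ for a small $0 < \nu < 4\exphom - \dlat$ balances the clustering error $\e^{-\Delta/\xi}$ against the polynomial growth from $|\cone|^{|J|}$ and produces a net algebraic decay $t^{-4\exphom + \dlat + \nu}$, valid throughout $t \leq \min(\trec, \vol)$. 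All constants are independent of $\vol$ because both $|S|$ and the assumed parameters $\xi, \Cclust, \Chom, \exphom$ are.

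Assembling the bounds across the $2^{2|S|}$ summands $J\subset\subSysInds$ yields a $\vol$-independent constant $C_{\mathrm{Total}}$ with
\begin{equation}
  \|\rho^S(t) - \rho_G^S(t)\|_1 \leq C_{\mathrm{Total}}\, t^{-4\exphom + \dlat + \nu} + C_{\mathrm{LR}}'\, \e^{-\mu v_\epsilon t}.
\end{equation}
Since $4\exphom > \dlat$, the right-hand side tends to zero, so given $\epsilon > 0$ there is a $\vol$-independent $\trelax$ with the bound $\leq \epsilon$ for all $t \geq \trelax$; because $\trec \to \infty$ with $\vol$, there exists $\vol$ large enough so that the interval $[\trelax, \trec]$ is nonempty, proving the claim. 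The main obstacle in this program is the combinatorial step embedded in Lemma~\ref{lm:bounding_clusters}: one must handle the interplay between clusters (which carry dynamical suppression) and pairs (which do not), while respecting the mutual exclusion imposed by the $\Delta$-partition. The recursion on the number of pairs, coupled with the need to sum over coarsenings of a partition, is what makes this accounting delicate; once it is executed the remaining steps are essentially choosing $\Delta$ optimally and applying triangle inequalities.
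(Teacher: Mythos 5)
Your proposal is correct and follows essentially the same route as the paper's own proof: basis expansion over subsets $J$, truncation to the widened Lieb--Robinson cone via Lemma~\ref{lm:LR}, reorganization into $\Delta$-partitions, factorization via Lemma~\ref{lm:decoupling} together with Lemma~\ref{lm:clusteringofcorrelationsgauss}, the parity and second-moment cancellations, the cluster suppression of Lemma~\ref{lm:bounding_clusters}, and the choice $\Delta = \max(1,t^{\nu/4\dlat})$. You also correctly identify the combinatorial recursion over pairs and coarsenings as the genuinely delicate step.
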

\begin{proof}
  To begin with we rewrite the one-norm as
  \begin{align}
    \| \rho^S(t) - \rho^S_G(t) \|_1 
    = \sup_{A \in \mathcal{A}_S} \tr\left(A (t) (\rho - \rho_G)\right) \; ,
  \end{align}
  and expand the operator $A$ in the basis of fermionic operators
  \begin{align}
    A(t) = \sum_{b_1, \dots, b_{2 |S|}=0}^1 a_{b_1,\cdots,b_{2 |S|}} \,  c_{s_1}(t)^{b_1} \dots c_{s_{2 |S|}}{(t)}^{b_{2 |S|}} \; . 
  \end{align} 
  Normalization of the operator $\|A\|=1$ implies that all of the $2^{2 |S|}$ coefficients
  satisfy $|a_{b_1,\cdots,b_{2m}}| \leq 1$, thus
  \begin{align}
    \| \rho^S(t) - \rho^S_G(t) \|_1  
    \leq{} & 2^{2|S|} \max_{J \subset \subSysInds} \biggl|\tr\biggl( \prod_{j \in J} c_{j}(t)\, (\rho - \rho_G)\biggr)\biggr| \\
    \leq{} & 2^{2|S|} \max_{J \subset \subSysInds} \biggl| \; 
    \sum_{(k_j)_{j \in J} \in [2 \vol]^{\times |J|}} \quad
    \tr\biggl( \prod_{j \in J} W_{j,k_j}(t)\, c_{k_j} \, (\rho - \rho_G)\biggr)\biggr| \; . \nonumber
  \end{align}
  Using the Lieb-Robinson bound stated in Lemma~\ref{lm:LR}, we can restrict the sum in the right hand side of the previous inequality to the Lieb-Robinson cone $\cone$.
  This leads to an error term that is exponentially suppressed in time $t$ and we obtain
  \begin{equation}
      \| \rho^S(t) - \rho^S_G(t) \|_1 
      \leq  2^{2|S|} \max_{J \subset \subSysInds} \biggl( \biggl| \; 
    \sum_{(k_j)_{j \in J} \in [\cone]^{\times |J|}}\quad
    \tr\biggl( \prod_{j \in J} W_{j,k_j}(t)\, c_{k_j} \, (\rho - \rho_G)\biggr) \biggr|
    + 2\tilde{C}_{\mathrm{LR}}(\dlat) |J|^2 \e^{-\mu v_\epsilon |t|} \biggr) \; .
  \end{equation}
  We now reorder the terms in the sum according to how the indices $k_j$ are distributed on the lattice. 
  To that end, in Section~\ref{deltapartitions}, we have introduced the concept of a $\Delta$-partition.
  We turn the sum into a sum over all possible partitions $\partitions(J)$ and then, for each partition $P\in\partitions(J)$, sum over all possible ways $\cstr{\cone}{P}$ to distributed the indices over the lattice whose $\Delta$-partition coincides with that given partition $P$.
  Partitions consist of patches and we collect the factors $W_{j,k_j}(t)\, c_{k_j}$ from the product over $j \in J$ into patch operators $\patch{p}$ for each patch $p$, as defined in Eq.~\eqref{eq:defpatches}.
  Together with the triangle inequality this yields
  \begin{equation}
      \| \rho^S(t) - \rho^S_G(t) \|_1
      \leq 2^{2|S|} \max_{J \subset \subSysInds} \biggl( \sum\limits_{P\in\partitions(J)} \biggl| 
        \sum_{(k_j)_{j \in J} \in \cstr{\cone}{P}}\quad \biggl( \expv{\prod\limits_{p\in P}\patch{p}}-\expvG{\prod\limits_{p\in P}\patch{p}}
        \biggr) \biggr|
    + 2\tilde{C}_{\mathrm{LR}}(\dlat) |J|^2 \e^{-\mu v_\epsilon |t|} \biggr) \; .
  \end{equation}
  Lemma~\ref{lm:decoupling} allows us to factor the expectation values with respect to $\rho$ and using Lemma~\ref{lm:clusteringofcorrelationsgauss} its Gaussified version $\rho_G$ into products of expectation values of the individual patch operators.
  This leads to an additional error term that grows polynomially with the size of the cone, but is exponentially suppressed in the minimal patch distance $\Delta$, so that we get
  \begin{equation}
    \begin{split}
    \| \rho^S(t) - \rho^S_G(t) \|_1  
    \leq{} & 2^{2|S|} \max_{J \subset \subSysInds} \biggl( \sum\limits_{P\in\partitions(J)} \biggl| 
    \sum_{(k_j)_{j \in J} \in \cstr{\cone}{P}}\quad \biggl( \prod\limits_{p\in P}\expv{\patch{p}}-\prod\limits_{p\in P}\expvG{\patch{p}}
    \biggr) \biggr|
    \\ &+ (1+2^{2|J|}|J|^{|J|})|J|^{|J|+3}\Cclust|\cone|^{|J|}\e^{-\Delta/\xi} 
    + 2\tilde{C}_{\mathrm{LR}}(\dlat) |J|^2 \e^{-\mu v_\epsilon |t|} \biggr) \; .
    \end{split}
  \end{equation}
  It is now apparent that partitions that contain at least one patch of odd size do not contribute to the sum as then the corresponding patch operator does not fulfill the parity super-selection rule.
  Likewise, partitions that contain only patches of size two do not contribute, as the expectation values of their patch operators are the same in $\rho$ and $\rho_G$.
  It remains to bound the contribution from the remaining partitions.
  For these we cannot use cancellations between the parts coming from $\rho$ and those coming from $\rho_G$, but instead bound them in absolute value.
  All these partitions contain at least one cluster of size at least four, which allows us to bound the corresponding term from the homogeneous suppression of the elements of the propagator implied by the delocalizing transport (see Definition 2).
  Doing this explicitly is tedious because of the interplay of contributions from the larger patches and those of size two, and the involved combinatorics of how the smaller patches can be distributed on the lattice.
  All this is done by first ordering contributions according to the number $m$ of patches of size two they contain and then applying Lemma~\ref{lm:bounding_clusters}, which internally uses a recursive argument.
  It yields an upper bound on the absolute value of sums of the form $\sum_{(k_j)_{j \in J} \in \cstr{\cone}{P}} \prod\limits_{p\in P} \expv{\patch{p}}$ that grows with $\Delta$ but is algebraically suppressed with time $t$.
  Assuming $\Delta \geq 1$ this yields for $t\leq\min(\trec,V)$ the following bound
  \begin{equation}
    \begin{split}
    \| \rho^S(t) - \rho^S_G(t) \|_1  
    \leq{} & 2^{2|S|} \biggl[2^{2|S|+1}|S|^{2|S|}\sum\limits_{m=0}^{|S|}
    (2^{3\dlat}|S|^{\dlat}|\cone|^{1/4}\Chom t^{-\exphom}\Delta^{\dlat})^{2m+4}
    \biggl(1+\Cclust|\cone|^2\e^{-\Delta/\xi}\\
    &+2^{2(3\dlat+1)|S|+1}|S|^{2|S|(\dlat+1)+1}\sum\limits_{r=0}^{|S|}(|\cone|^{1/4}\Chom t^{-\exphom}\Delta^{\dlat})^{2r+4}\biggr)^{|S|}
    \\ &+ 8 \tilde{C}_{\mathrm{LR}}(\dlat) |S|^2 \e^{-\mu v_\epsilon |t|}
      + 2^{8|S|+4}|S|^{4|S|+3}\Cclust|\cone|^{2|S|}\e^{-\Delta/\xi} \biggr] \; .
    \end{split}
    \label{eq:final_bound}
  \end{equation}
  Recalling that $|\cone|\leq2|S|[2(v+2v_\epsilon)t+1]^{\dlat}$ and setting $v_\epsilon=v$ one realizes that by letting $\Delta$ grow in a suitable way with $t$, all terms are at least algebraically suppressed in $t$.
  More precisely, this happens for all $\Delta = \max(1,t^{\nu/4\dlat})$ with $0 < \nu < 4\exphom - \dlat$.
  For large enough times the bound is then dominated by a power-law originating from the $m=0$ term of the first sum.
  Eq.~\eqref{eq:final_bound} in particular implies that there exists a constant $C_{\mathrm{Total}}$ that depends on $\nu$, $|S|$, and the physical parameters of the Hamiltonian and initial state, but that is independent of the size of the total system, such that for all $t\leq\min(\trec,V)$
  \begin{equation}
    \| \rho^S(t) - \rho^S_G(t) \|_1 \leq C_{\mathrm{Total}} \, t^{-4\exphom + \dlat + \nu} .
  \end{equation}
  Moreover, for every $\epsilon$ there exists a critical system size from which on the bound above becomes smaller than that $\epsilon$ for a suitable relaxation time $\trelax\leq\min(\trec,V)$. The saturation of the delocalizing transport once $t$ is equal to $V$ implies a minimal value that bound can achieve for any given $V$. This sets the minimal system size for which $\trelax<\trec$.
\end{proof}
\end{widetext}


\begin{thebibliography}{36}%
\makeatletter
\providecommand \@ifxundefined [1]{%
 \@ifx{#1\undefined}
}%
\providecommand \@ifnum [1]{%
 \ifnum #1\expandafter \@firstoftwo
 \else \expandafter \@secondoftwo
 \fi
}%
\providecommand \@ifx [1]{%
 \ifx #1\expandafter \@firstoftwo
 \else \expandafter \@secondoftwo
 \fi
}%
\providecommand \natexlab [1]{#1}%
\providecommand \enquote  [1]{``#1''}%
\providecommand \bibnamefont  [1]{#1}%
\providecommand \bibfnamefont [1]{#1}%
\providecommand \citenamefont [1]{#1}%
\providecommand \href@noop [0]{\@secondoftwo}%
\providecommand \href [0]{\begingroup \@sanitize@url \@href}%
\providecommand \@href[1]{\@@startlink{#1}\@@href}%
\providecommand \@@href[1]{\endgroup#1\@@endlink}%
\providecommand \@sanitize@url [0]{\catcode `\\12\catcode `\$12\catcode
  `\&12\catcode `\#12\catcode `\^12\catcode `\_12\catcode `\%12\relax}%
\providecommand \@@startlink[1]{}%
\providecommand \@@endlink[0]{}%
\providecommand \url  [0]{\begingroup\@sanitize@url \@url }%
\providecommand \@url [1]{\endgroup\@href {#1}{\urlprefix }}%
\providecommand \urlprefix  [0]{URL }%
\providecommand \Eprint [0]{\href }%
\providecommand \doibase [0]{http://dx.doi.org/}%
\providecommand \selectlanguage [0]{\@gobble}%
\providecommand \bibinfo  [0]{\@secondoftwo}%
\providecommand \bibfield  [0]{\@secondoftwo}%
\providecommand \translation [1]{[#1]}%
\providecommand \BibitemOpen [0]{}%
\providecommand \bibitemStop [0]{}%
\providecommand \bibitemNoStop [0]{.\EOS\space}%
\providecommand \EOS [0]{\spacefactor3000\relax}%
\providecommand \BibitemShut  [1]{\csname bibitem#1\endcsname}%
\let\auto@bib@innerbib\@empty
%</preamble>
\bibitem [{\citenamefont {Polkovnikov}\ \emph {et~al.}(2011)\citenamefont
  {Polkovnikov}, \citenamefont {Sengupta}, \citenamefont {Silva},\ and\
  \citenamefont {Vengalattore}}]{PolkovnikovReview}%
  \BibitemOpen
  \bibfield  {author} {\bibinfo {author} {\bibfnamefont {A.}~\bibnamefont
  {Polkovnikov}}, \bibinfo {author} {\bibfnamefont {K.}~\bibnamefont
  {Sengupta}}, \bibinfo {author} {\bibfnamefont {A.}~\bibnamefont {Silva}}, \
  and\ \bibinfo {author} {\bibfnamefont {M.}~\bibnamefont {Vengalattore}},\
  }\href@noop {} {\bibfield  {journal} {\bibinfo  {journal} {Rev. Mod. Phys.}\
  }\textbf {\bibinfo {volume} {83}},\ \bibinfo {pages} {863} (\bibinfo {year}
  {2011})}\BibitemShut {NoStop}%
\bibitem [{\citenamefont {Eisert}\ \emph {et~al.}(2015)\citenamefont {Eisert},
  \citenamefont {Friesdorf},\ and\ \citenamefont {Gogolin}}]{1408.5148}%
  \BibitemOpen
  \bibfield  {author} {\bibinfo {author} {\bibfnamefont {J.}~\bibnamefont
  {Eisert}}, \bibinfo {author} {\bibfnamefont {M.}~\bibnamefont {Friesdorf}}, \
  and\ \bibinfo {author} {\bibfnamefont {C.}~\bibnamefont {Gogolin}},\
  }\href@noop {} {\bibfield  {journal} {\bibinfo  {journal} {Nature Phys.}\
  }\textbf {\bibinfo {volume} {11}},\ \bibinfo {pages} {124} (\bibinfo {year}
  {2015})}\BibitemShut {NoStop}%
\bibitem [{\citenamefont {Gogolin}\ and\ \citenamefont
  {Eisert}(2016)}]{1503.07538}%
  \BibitemOpen
  \bibfield  {author} {\bibinfo {author} {\bibfnamefont {C.}~\bibnamefont
  {Gogolin}}\ and\ \bibinfo {author} {\bibfnamefont {J.}~\bibnamefont
  {Eisert}},\ }\href@noop {} {\bibfield  {journal} {\bibinfo  {journal} {Rep.
  Prog. Phys.}\ }\textbf {\bibinfo {volume} {79}},\ \bibinfo {pages} {056001}
  (\bibinfo {year} {2016})}\BibitemShut {NoStop}%
\bibitem [{\citenamefont {Bloch}\ \emph {et~al.}(2012)\citenamefont {Bloch},
  \citenamefont {Dalibard},\ and\ \citenamefont {Nascimbene}}]{BlochSimulator}%
  \BibitemOpen
  \bibfield  {author} {\bibinfo {author} {\bibfnamefont {I.}~\bibnamefont
  {Bloch}}, \bibinfo {author} {\bibfnamefont {J.}~\bibnamefont {Dalibard}}, \
  and\ \bibinfo {author} {\bibfnamefont {S.}~\bibnamefont {Nascimbene}},\
  }\href@noop {} {\bibfield  {journal} {\bibinfo  {journal} {Nature Phys.}\
  }\textbf {\bibinfo {volume} {8}},\ \bibinfo {pages} {267} (\bibinfo {year}
  {2012})}\BibitemShut {NoStop}%
\bibitem [{\citenamefont {Trotzky}\ \emph {et~al.}(2012)\citenamefont
  {Trotzky}, \citenamefont {Chen}, \citenamefont {Flesch}, \citenamefont
  {McCulloch}, \citenamefont {Schollw\"ock}, \citenamefont {Eisert},\ and\
  \citenamefont {Bloch}}]{nature_bloch_eisert}%
  \BibitemOpen
  \bibfield  {author} {\bibinfo {author} {\bibfnamefont {S.}~\bibnamefont
  {Trotzky}}, \bibinfo {author} {\bibfnamefont {Y.-O.}\ \bibnamefont {Chen}},
  \bibinfo {author} {\bibfnamefont {A.}~\bibnamefont {Flesch}}, \bibinfo
  {author} {\bibfnamefont {I.}~\bibnamefont {McCulloch}}, \bibinfo {author}
  {\bibfnamefont {U.}~\bibnamefont {Schollw\"ock}}, \bibinfo {author}
  {\bibfnamefont {J.}~\bibnamefont {Eisert}}, \ and\ \bibinfo {author}
  {\bibfnamefont {I.}~\bibnamefont {Bloch}},\ }\href@noop {} {\bibfield
  {journal} {\bibinfo  {journal} {Nature Phys.}\ }\textbf {\bibinfo {volume}
  {8}},\ \bibinfo {pages} {325} (\bibinfo {year} {2012})}\BibitemShut {NoStop}%
\bibitem [{\citenamefont {Cramer}\ \emph {et~al.}(2008)\citenamefont {Cramer},
  \citenamefont {Dawson}, \citenamefont {Eisert},\ and\ \citenamefont
  {Osborne}}]{CramerEisert}%
  \BibitemOpen
  \bibfield  {author} {\bibinfo {author} {\bibfnamefont {M.}~\bibnamefont
  {Cramer}}, \bibinfo {author} {\bibfnamefont {C.~M.}\ \bibnamefont {Dawson}},
  \bibinfo {author} {\bibfnamefont {J.}~\bibnamefont {Eisert}}, \ and\ \bibinfo
  {author} {\bibfnamefont {T.~J.}\ \bibnamefont {Osborne}},\ }\href@noop {}
  {\bibfield  {journal} {\bibinfo  {journal} {Phys. Rev. Lett.}\ }\textbf
  {\bibinfo {volume} {100}},\ \bibinfo {pages} {030602} (\bibinfo {year}
  {2008})}\BibitemShut {NoStop}%
\bibitem [{\citenamefont {Linden}\ \emph {et~al.}(2009)\citenamefont {Linden},
  \citenamefont {Popescu}, \citenamefont {Short},\ and\ \citenamefont
  {Winter}}]{Linden_etal09}%
  \BibitemOpen
  \bibfield  {author} {\bibinfo {author} {\bibfnamefont {N.}~\bibnamefont
  {Linden}}, \bibinfo {author} {\bibfnamefont {S.}~\bibnamefont {Popescu}},
  \bibinfo {author} {\bibfnamefont {A.~J.}\ \bibnamefont {Short}}, \ and\
  \bibinfo {author} {\bibfnamefont {A.}~\bibnamefont {Winter}},\ }\href@noop {}
  {\bibfield  {journal} {\bibinfo  {journal} {Phys. Rev. E}\ }\textbf {\bibinfo
  {volume} {79}},\ \bibinfo {pages} {061103} (\bibinfo {year}
  {2009})}\BibitemShut {NoStop}%
\bibitem [{\citenamefont {Short}\ and\ \citenamefont
  {Farrelly}(2012)}]{1110.5759}%
  \BibitemOpen
  \bibfield  {author} {\bibinfo {author} {\bibfnamefont {A.~J.}\ \bibnamefont
  {Short}}\ and\ \bibinfo {author} {\bibfnamefont {T.~C.}\ \bibnamefont
  {Farrelly}},\ }\href@noop {} {\bibfield  {journal} {\bibinfo  {journal} {New
  J. Phys.}\ }\textbf {\bibinfo {volume} {14}},\ \bibinfo {pages} {013063}
  (\bibinfo {year} {2012})}\BibitemShut {NoStop}%
\bibitem [{\citenamefont {Rigol}\ \emph {et~al.}(2007)\citenamefont {Rigol},
  \citenamefont {Dunjko}, \citenamefont {Yurovsky},\ and\ \citenamefont
  {Olshanii}}]{RigolFirst}%
  \BibitemOpen
  \bibfield  {author} {\bibinfo {author} {\bibfnamefont {M.}~\bibnamefont
  {Rigol}}, \bibinfo {author} {\bibfnamefont {V.}~\bibnamefont {Dunjko}},
  \bibinfo {author} {\bibfnamefont {V.}~\bibnamefont {Yurovsky}}, \ and\
  \bibinfo {author} {\bibfnamefont {M.}~\bibnamefont {Olshanii}},\ }\href@noop
  {} {\bibfield  {journal} {\bibinfo  {journal} {Phys. Rev. Lett.}\ }\textbf
  {\bibinfo {volume} {98}},\ \bibinfo {pages} {050405} (\bibinfo {year}
  {2007})}\BibitemShut {NoStop}%
\bibitem [{\citenamefont {Reimann}\ and\ \citenamefont
  {Kastner}(2012)}]{ReimannKastner12}%
  \BibitemOpen
  \bibfield  {author} {\bibinfo {author} {\bibfnamefont {P.}~\bibnamefont
  {Reimann}}\ and\ \bibinfo {author} {\bibfnamefont {M.}~\bibnamefont
  {Kastner}},\ }\href@noop {} {\bibfield  {journal} {\bibinfo  {journal} {New
  J. Phys.}\ }\textbf {\bibinfo {volume} {14}},\ \bibinfo {pages} {043020}
  (\bibinfo {year} {2012})}\BibitemShut {NoStop}%
\bibitem [{\citenamefont {Malabarba}\ \emph {et~al.}(2014)\citenamefont
  {Malabarba}, \citenamefont {Garcia-Pintos}, \citenamefont {Linden},
  \citenamefont {Farrelly},\ and\ \citenamefont {Short}}]{PhysRevE.90.012121}%
  \BibitemOpen
  \bibfield  {author} {\bibinfo {author} {\bibfnamefont {A.~S.~L.}\
  \bibnamefont {Malabarba}}, \bibinfo {author} {\bibfnamefont {L.~P.}\
  \bibnamefont {Garcia-Pintos}}, \bibinfo {author} {\bibfnamefont
  {N.}~\bibnamefont {Linden}}, \bibinfo {author} {\bibfnamefont {T.~C.}\
  \bibnamefont {Farrelly}}, \ and\ \bibinfo {author} {\bibfnamefont {A.~J.}\
  \bibnamefont {Short}},\ }\href@noop {} {\bibfield  {journal} {\bibinfo
  {journal} {Phys. Rev. E}\ }\textbf {\bibinfo {volume} {90}},\ \bibinfo
  {pages} {012121} (\bibinfo {year} {2014})}\BibitemShut {NoStop}%
\bibitem [{\citenamefont {Pertot}\ \emph {et~al.}(2014)\citenamefont {Pertot},
  \citenamefont {Sheikhan}, \citenamefont {Cocchi}, \citenamefont {Miller},
  \citenamefont {Bohn}, \citenamefont {Koschorreck}, \citenamefont {K{\"o}hl},\
  and\ \citenamefont {Kollath}}]{Koehl3}%
  \BibitemOpen
  \bibfield  {author} {\bibinfo {author} {\bibfnamefont {D.}~\bibnamefont
  {Pertot}}, \bibinfo {author} {\bibfnamefont {A.}~\bibnamefont {Sheikhan}},
  \bibinfo {author} {\bibfnamefont {E.}~\bibnamefont {Cocchi}}, \bibinfo
  {author} {\bibfnamefont {L.~A.}\ \bibnamefont {Miller}}, \bibinfo {author}
  {\bibfnamefont {J.~E.}\ \bibnamefont {Bohn}}, \bibinfo {author}
  {\bibfnamefont {M.}~\bibnamefont {Koschorreck}}, \bibinfo {author}
  {\bibfnamefont {M.}~\bibnamefont {K{\"o}hl}}, \ and\ \bibinfo {author}
  {\bibfnamefont {C.}~\bibnamefont {Kollath}},\ }\href@noop {} {\bibfield
  {journal} {\bibinfo  {journal} {Phys. Rev. Lett.}\ }\textbf {\bibinfo
  {volume} {113}},\ \bibinfo {pages} {170403} (\bibinfo {year}
  {2014})}\BibitemShut {NoStop}%
\bibitem [{\citenamefont {Hastings}\ and\ \citenamefont
  {Koma}(2006)}]{math-ph/0507008}%
  \BibitemOpen
  \bibfield  {author} {\bibinfo {author} {\bibfnamefont {M.~B.}\ \bibnamefont
  {Hastings}}\ and\ \bibinfo {author} {\bibfnamefont {T.}~\bibnamefont
  {Koma}},\ }\href@noop {} {\bibfield  {journal} {\bibinfo  {journal} {Commun.
  Math. Phys.}\ }\textbf {\bibinfo {volume} {265}},\ \bibinfo {pages} {781}
  (\bibinfo {year} {2006})}\BibitemShut {NoStop}%
\bibitem [{\citenamefont {Nachtergaele}\ and\ \citenamefont
  {Sims}(2007)}]{Nachtergaele2013}%
  \BibitemOpen
  \bibfield  {author} {\bibinfo {author} {\bibfnamefont {B.}~\bibnamefont
  {Nachtergaele}}\ and\ \bibinfo {author} {\bibfnamefont {R.}~\bibnamefont
  {Sims}},\ }\href@noop {} {\bibfield  {journal} {\bibinfo  {journal} {Selected
  contributions of the XVth International Congress on Mathematical Physics}\
  }\bibinfo {series} {New Trends in Mathematical Physics} (\bibinfo {year}
  {2007})}\BibitemShut {NoStop}%
\bibitem [{\citenamefont {Kliesch}\ \emph {et~al.}(2014)\citenamefont
  {Kliesch}, \citenamefont {Gogolin}, \citenamefont {Kastoryano}, \citenamefont
  {Riera},\ and\ \citenamefont {Eisert}}]{Kliesch2014}%
  \BibitemOpen
  \bibfield  {author} {\bibinfo {author} {\bibfnamefont {M.}~\bibnamefont
  {Kliesch}}, \bibinfo {author} {\bibfnamefont {C.}~\bibnamefont {Gogolin}},
  \bibinfo {author} {\bibfnamefont {M.~J.}\ \bibnamefont {Kastoryano}},
  \bibinfo {author} {\bibfnamefont {A.}~\bibnamefont {Riera}}, \ and\ \bibinfo
  {author} {\bibfnamefont {J.}~\bibnamefont {Eisert}},\ }\href@noop {}
  {\bibfield  {journal} {\bibinfo  {journal} {Phys. Rev. X}\ }\textbf {\bibinfo
  {volume} {4}},\ \bibinfo {pages} {031019} (\bibinfo {year}
  {2014})}\BibitemShut {NoStop}%
\bibitem [{\citenamefont {Hudson}(1973)}]{Hudson}%
  \BibitemOpen
  \bibfield  {author} {\bibinfo {author} {\bibfnamefont {R.~L.}\ \bibnamefont
  {Hudson}},\ }\href@noop {} {\bibfield  {journal} {\bibinfo  {journal} {J.
  Appl. Prob.}\ }\textbf {\bibinfo {volume} {10}},\ \bibinfo {pages} {502}
  (\bibinfo {year} {1973})}\BibitemShut {NoStop}%
\bibitem [{\citenamefont {Dudnikova}\ \emph {et~al.}(2003)\citenamefont
  {Dudnikova}, \citenamefont {Komech},\ and\ \citenamefont {Spohn}}]{Spohn}%
  \BibitemOpen
  \bibfield  {author} {\bibinfo {author} {\bibfnamefont {T.~V.}\ \bibnamefont
  {Dudnikova}}, \bibinfo {author} {\bibfnamefont {A.}~\bibnamefont {Komech}}, \
  and\ \bibinfo {author} {\bibfnamefont {H.}~\bibnamefont {Spohn}},\
  }\href@noop {} {\bibfield  {journal} {\bibinfo  {journal} {J. Math. Phys.}\
  }\textbf {\bibinfo {volume} {44}},\ \bibinfo {pages} {2596} (\bibinfo {year}
  {2003})}\BibitemShut {NoStop}%
\bibitem [{\citenamefont {Lieb}\ and\ \citenamefont
  {Robinson}(1972)}]{liebrobinson}%
  \BibitemOpen
  \bibfield  {author} {\bibinfo {author} {\bibfnamefont {E.~H.}\ \bibnamefont
  {Lieb}}\ and\ \bibinfo {author} {\bibfnamefont {D.~W.}\ \bibnamefont
  {Robinson}},\ }\href@noop {} {\bibfield  {journal} {\bibinfo  {journal}
  {Commun. Math. Phys.}\ }\textbf {\bibinfo {volume} {28}},\ \bibinfo {pages}
  {251} (\bibinfo {year} {1972})}\BibitemShut {NoStop}%
\bibitem [{\citenamefont {Hastings}(2004)}]{Hastings2004a}%
  \BibitemOpen
  \bibfield  {author} {\bibinfo {author} {\bibfnamefont {M.~B.}\ \bibnamefont
  {Hastings}},\ }\href@noop {} {\bibfield  {journal} {\bibinfo  {journal}
  {Phys. Rev. Lett.}\ }\textbf {\bibinfo {volume} {93}},\ \bibinfo {pages}
  {126402} (\bibinfo {year} {2004})}\BibitemShut {NoStop}%
\bibitem [{\citenamefont {Flesch}\ \emph {et~al.}(2008)\citenamefont {Flesch},
  \citenamefont {Cramer}, \citenamefont {McCulloch}, \citenamefont
  {Scholl\-w\"ock},\ and\ \citenamefont {Eisert}}]{CramerEisertScholl08}%
  \BibitemOpen
  \bibfield  {author} {\bibinfo {author} {\bibfnamefont {A.}~\bibnamefont
  {Flesch}}, \bibinfo {author} {\bibfnamefont {M.}~\bibnamefont {Cramer}},
  \bibinfo {author} {\bibfnamefont {I.~P.}\ \bibnamefont {McCulloch}}, \bibinfo
  {author} {\bibfnamefont {U.}~\bibnamefont {Scholl\-w\"ock}}, \ and\ \bibinfo
  {author} {\bibfnamefont {J.}~\bibnamefont {Eisert}},\ }\href@noop {}
  {\bibfield  {journal} {\bibinfo  {journal} {Phys. Rev. A}\ }\textbf {\bibinfo
  {volume} {78}},\ \bibinfo {pages} {033608} (\bibinfo {year}
  {2008})}\BibitemShut {NoStop}%
\bibitem [{\citenamefont {Calabrese}\ and\ \citenamefont
  {Cardy}(2006)}]{CalabreseCardy06}%
  \BibitemOpen
  \bibfield  {author} {\bibinfo {author} {\bibfnamefont {P.}~\bibnamefont
  {Calabrese}}\ and\ \bibinfo {author} {\bibfnamefont {J.}~\bibnamefont
  {Cardy}},\ }\href@noop {} {\bibfield  {journal} {\bibinfo  {journal} {Phys.
  Rev. Lett.}\ }\textbf {\bibinfo {volume} {96}},\ \bibinfo {pages} {136801}
  (\bibinfo {year} {2006})}\BibitemShut {NoStop}%
\bibitem [{\citenamefont {Calabrese}\ \emph {et~al.}(2012)\citenamefont
  {Calabrese}, \citenamefont {Essler},\ and\ \citenamefont
  {Fagotti}}]{1205.2211}%
  \BibitemOpen
  \bibfield  {author} {\bibinfo {author} {\bibfnamefont {P.}~\bibnamefont
  {Calabrese}}, \bibinfo {author} {\bibfnamefont {F.~H.}\ \bibnamefont
  {Essler}}, \ and\ \bibinfo {author} {\bibfnamefont {M.}~\bibnamefont
  {Fagotti}},\ }\href@noop {} {\bibfield  {journal} {\bibinfo  {journal} {J.
  Stat. Mech.}\ ,\ \bibinfo {pages} {P07022\,}} (\bibinfo {year}
  {2012})}\BibitemShut {NoStop}%
\bibitem [{\citenamefont {Peschel}\ and\ \citenamefont
  {Eisler}(2009)}]{0906.1663}%
  \BibitemOpen
  \bibfield  {author} {\bibinfo {author} {\bibfnamefont {I.}~\bibnamefont
  {Peschel}}\ and\ \bibinfo {author} {\bibfnamefont {V.}~\bibnamefont
  {Eisler}},\ }\href@noop {} {\bibfield  {journal} {\bibinfo  {journal} {J.
  Phys. A}\ }\textbf {\bibinfo {volume} {42}},\ \bibinfo {pages} {504003}
  (\bibinfo {year} {2009})}\BibitemShut {NoStop}%
\bibitem [{\citenamefont {Fagotti}\ and\ \citenamefont
  {Essler}(2013)}]{1302.6944}%
  \BibitemOpen
  \bibfield  {author} {\bibinfo {author} {\bibfnamefont {M.}~\bibnamefont
  {Fagotti}}\ and\ \bibinfo {author} {\bibfnamefont {F.~H.~L.}\ \bibnamefont
  {Essler}},\ }\href@noop {} {\bibfield  {journal} {\bibinfo  {journal} {Phys.
  Rev. B}\ }\textbf {\bibinfo {volume} {87}},\ \bibinfo {pages} {245107}
  (\bibinfo {year} {2013})}\BibitemShut {NoStop}%
\bibitem [{\citenamefont {Bhattacharyya}\ \emph {et~al.}(2012)\citenamefont
  {Bhattacharyya}, \citenamefont {Das},\ and\ \citenamefont
  {Dasgupta}}]{PeriodicDriven_2}%
  \BibitemOpen
  \bibfield  {author} {\bibinfo {author} {\bibfnamefont {S.}~\bibnamefont
  {Bhattacharyya}}, \bibinfo {author} {\bibfnamefont {A.}~\bibnamefont {Das}},
  \ and\ \bibinfo {author} {\bibfnamefont {S.}~\bibnamefont {Dasgupta}},\
  }\href@noop {} {\bibfield  {journal} {\bibinfo  {journal} {Phys. Rev. B}\
  }\textbf {\bibinfo {volume} {86}},\ \bibinfo {pages} {054410} (\bibinfo
  {year} {2012})}\BibitemShut {NoStop}%
\bibitem [{\citenamefont {Manmana}\ \emph {et~al.}(2007)\citenamefont
  {Manmana}, \citenamefont {Wessel}, \citenamefont {Noack},\ and\ \citenamefont
  {Muramatsu}}]{Muramatsu}%
  \BibitemOpen
  \bibfield  {author} {\bibinfo {author} {\bibfnamefont {S.~R.}\ \bibnamefont
  {Manmana}}, \bibinfo {author} {\bibfnamefont {S.}~\bibnamefont {Wessel}},
  \bibinfo {author} {\bibfnamefont {R.~M.}\ \bibnamefont {Noack}}, \ and\
  \bibinfo {author} {\bibfnamefont {A.}~\bibnamefont {Muramatsu}},\ }\href@noop
  {} {\bibfield  {journal} {\bibinfo  {journal} {Phys. Rev. Lett.}\ }\textbf
  {\bibinfo {volume} {98}},\ \bibinfo {pages} {210405} (\bibinfo {year}
  {2007})}\BibitemShut {NoStop}%
\bibitem [{\citenamefont {Schneider}\ \emph {et~al.}(2012)\citenamefont
  {Schneider}, \citenamefont {Hackermuller}, \citenamefont {Ronzheimer},
  \citenamefont {Will}, \citenamefont {Braun}, \citenamefont {Best},
  \citenamefont {Bloch}, \citenamefont {Demler}, \citenamefont {Mandt},
  \citenamefont {Rasch},\ and\ \citenamefont
  {Rosch}}]{Schneider_fermionic_transport}%
  \BibitemOpen
  \bibfield  {author} {\bibinfo {author} {\bibfnamefont {U.}~\bibnamefont
  {Schneider}}, \bibinfo {author} {\bibfnamefont {L.}~\bibnamefont
  {Hackermuller}}, \bibinfo {author} {\bibfnamefont {J.~P.}\ \bibnamefont
  {Ronzheimer}}, \bibinfo {author} {\bibfnamefont {S.}~\bibnamefont {Will}},
  \bibinfo {author} {\bibfnamefont {S.}~\bibnamefont {Braun}}, \bibinfo
  {author} {\bibfnamefont {T.}~\bibnamefont {Best}}, \bibinfo {author}
  {\bibfnamefont {I.}~\bibnamefont {Bloch}}, \bibinfo {author} {\bibfnamefont
  {E.}~\bibnamefont {Demler}}, \bibinfo {author} {\bibfnamefont
  {S.}~\bibnamefont {Mandt}}, \bibinfo {author} {\bibfnamefont
  {D.}~\bibnamefont {Rasch}}, \ and\ \bibinfo {author} {\bibfnamefont
  {A.}~\bibnamefont {Rosch}},\ }\href@noop {} {\bibfield  {journal} {\bibinfo
  {journal} {Nature Phys.}\ }\textbf {\bibinfo {volume} {8}},\ \bibinfo {pages}
  {213} (\bibinfo {year} {2012})}\BibitemShut {NoStop}%
\bibitem [{\citenamefont {Parsons}\ \emph {et~al.}(2015)\citenamefont
  {Parsons}, \citenamefont {Huber}, \citenamefont {Mazurenko}, \citenamefont
  {Chiu}, \citenamefont {Setiawan}, \citenamefont {Wooley-Brown}, \citenamefont
  {Blatt},\ and\ \citenamefont {Greiner}}]{GreinerFermions}%
  \BibitemOpen
  \bibfield  {author} {\bibinfo {author} {\bibfnamefont {M.~F.}\ \bibnamefont
  {Parsons}}, \bibinfo {author} {\bibfnamefont {F.}~\bibnamefont {Huber}},
  \bibinfo {author} {\bibfnamefont {A.}~\bibnamefont {Mazurenko}}, \bibinfo
  {author} {\bibfnamefont {C.~S.}\ \bibnamefont {Chiu}}, \bibinfo {author}
  {\bibfnamefont {W.}~\bibnamefont {Setiawan}}, \bibinfo {author}
  {\bibfnamefont {K.}~\bibnamefont {Wooley-Brown}}, \bibinfo {author}
  {\bibfnamefont {S.}~\bibnamefont {Blatt}}, \ and\ \bibinfo {author}
  {\bibfnamefont {M.}~\bibnamefont {Greiner}},\ }\href@noop {} {\bibfield
  {journal} {\bibinfo  {journal} {Phys. Rev. Lett.}\ }\textbf {\bibinfo
  {volume} {114}},\ \bibinfo {pages} {213002} (\bibinfo {year}
  {2015})}\BibitemShut {NoStop}%
\bibitem [{\citenamefont {Strohmaier}\ \emph {et~al.}(2007)\citenamefont
  {Strohmaier}, \citenamefont {Takasu}, \citenamefont {G{\"u}nther},
  \citenamefont {J{\"o}rdens}, \citenamefont {K{\"o}hl}, \citenamefont
  {Moritz},\ and\ \citenamefont {Esslinger}}]{Koehl2}%
  \BibitemOpen
  \bibfield  {author} {\bibinfo {author} {\bibfnamefont {N.}~\bibnamefont
  {Strohmaier}}, \bibinfo {author} {\bibfnamefont {Y.}~\bibnamefont {Takasu}},
  \bibinfo {author} {\bibfnamefont {K.}~\bibnamefont {G{\"u}nther}}, \bibinfo
  {author} {\bibfnamefont {R.}~\bibnamefont {J{\"o}rdens}}, \bibinfo {author}
  {\bibfnamefont {M.}~\bibnamefont {K{\"o}hl}}, \bibinfo {author}
  {\bibfnamefont {H.}~\bibnamefont {Moritz}}, \ and\ \bibinfo {author}
  {\bibfnamefont {T.}~\bibnamefont {Esslinger}},\ }\href@noop {} {\bibfield
  {journal} {\bibinfo  {journal} {Phys. Rev. Lett.}\ }\textbf {\bibinfo
  {volume} {99}},\ \bibinfo {pages} {220601} (\bibinfo {year}
  {2007})}\BibitemShut {NoStop}%
\bibitem [{\citenamefont {Haller}\ \emph {et~al.}(2015)\citenamefont {Haller},
  \citenamefont {Hudson}, \citenamefont {Kelly}, \citenamefont {Cotta},
  \citenamefont {Peaudecerf}, \citenamefont {Bruce},\ and\ \citenamefont
  {Kuhr}}]{KuhrFermions}%
  \BibitemOpen
  \bibfield  {author} {\bibinfo {author} {\bibfnamefont {E.}~\bibnamefont
  {Haller}}, \bibinfo {author} {\bibfnamefont {J.}~\bibnamefont {Hudson}},
  \bibinfo {author} {\bibfnamefont {A.}~\bibnamefont {Kelly}}, \bibinfo
  {author} {\bibfnamefont {D.~A.}\ \bibnamefont {Cotta}}, \bibinfo {author}
  {\bibfnamefont {B.}~\bibnamefont {Peaudecerf}}, \bibinfo {author}
  {\bibfnamefont {G.~D.}\ \bibnamefont {Bruce}}, \ and\ \bibinfo {author}
  {\bibfnamefont {S.}~\bibnamefont {Kuhr}},\ }\href@noop {} {\bibfield
  {journal} {\bibinfo  {journal} {Nature Phys.}\ }\textbf {\bibinfo {volume}
  {11}},\ \bibinfo {pages} {738} (\bibinfo {year} {2015})}\BibitemShut
  {NoStop}%
\bibitem [{\citenamefont {Lazarides}\ \emph {et~al.}(2014)\citenamefont
  {Lazarides}, \citenamefont {Das},\ and\ \citenamefont
  {Moessner}}]{PeriodicDriven_1}%
  \BibitemOpen
  \bibfield  {author} {\bibinfo {author} {\bibfnamefont {A.}~\bibnamefont
  {Lazarides}}, \bibinfo {author} {\bibfnamefont {A.}~\bibnamefont {Das}}, \
  and\ \bibinfo {author} {\bibfnamefont {R.}~\bibnamefont {Moessner}},\
  }\href@noop {} {\bibfield  {journal} {\bibinfo  {journal} {Phys. Rev. Lett.}\
  }\textbf {\bibinfo {volume} {112}},\ \bibinfo {pages} {150401} (\bibinfo
  {year} {2014})}\BibitemShut {NoStop}%
\bibitem [{\citenamefont {Cramer}\ and\ \citenamefont {Eisert}(2010)}]{NJP}%
  \BibitemOpen
  \bibfield  {author} {\bibinfo {author} {\bibfnamefont {M.}~\bibnamefont
  {Cramer}}\ and\ \bibinfo {author} {\bibfnamefont {J.}~\bibnamefont
  {Eisert}},\ }\href@noop {} {\bibfield  {journal} {\bibinfo  {journal} {New J.
  Phys.}\ }\textbf {\bibinfo {volume} {12}},\ \bibinfo {pages} {055020}
  (\bibinfo {year} {2010})}\BibitemShut {NoStop}%
\bibitem [{\citenamefont {Kitaev}(2006)}]{cond-mat/0506438}%
  \BibitemOpen
  \bibfield  {author} {\bibinfo {author} {\bibfnamefont {A.}~\bibnamefont
  {Kitaev}},\ }\href@noop {} {\bibfield  {journal} {\bibinfo  {journal} {Ann.
  Phys.}\ }\textbf {\bibinfo {volume} {321}},\ \bibinfo {pages} {2} (\bibinfo
  {year} {2006})}\BibitemShut {NoStop}%
\bibitem [{\citenamefont {Lieb}\ \emph {et~al.}(1961)\citenamefont {Lieb},
  \citenamefont {Schultz},\ and\ \citenamefont {Mattis}}]{Lieb_JW}%
  \BibitemOpen
  \bibfield  {author} {\bibinfo {author} {\bibfnamefont {E.}~\bibnamefont
  {Lieb}}, \bibinfo {author} {\bibfnamefont {T.}~\bibnamefont {Schultz}}, \
  and\ \bibinfo {author} {\bibfnamefont {D.}~\bibnamefont {Mattis}},\
  }\href@noop {} {\bibfield  {journal} {\bibinfo  {journal} {Ann. Phys.}\
  }\textbf {\bibinfo {volume} {16}},\ \bibinfo {pages} {407} (\bibinfo {year}
  {1961})}\BibitemShut {NoStop}%
\bibitem [{\citenamefont {Kraus}\ \emph {et~al.}(2009)\citenamefont {Kraus},
  \citenamefont {Wolf}, \citenamefont {Cirac},\ and\ \citenamefont
  {Giedke}}]{Kraus_WicksTheorem}%
  \BibitemOpen
  \bibfield  {author} {\bibinfo {author} {\bibfnamefont {C.~V.}\ \bibnamefont
  {Kraus}}, \bibinfo {author} {\bibfnamefont {M.~M.}\ \bibnamefont {Wolf}},
  \bibinfo {author} {\bibfnamefont {J.~I.}\ \bibnamefont {Cirac}}, \ and\
  \bibinfo {author} {\bibfnamefont {G.}~\bibnamefont {Giedke}},\ }\href@noop {}
  {\bibfield  {journal} {\bibinfo  {journal} {Phys. Rev. A}\ }\textbf {\bibinfo
  {volume} {79}},\ \bibinfo {pages} {012306} (\bibinfo {year}
  {2009})}\BibitemShut {NoStop}%
\bibitem [{\citenamefont {van Lint}\ and\ \citenamefont {Wilson}(1992)}]{lint}%
  \BibitemOpen
  \bibfield  {author} {\bibinfo {author} {\bibfnamefont {J.~H.}\ \bibnamefont
  {van Lint}}\ and\ \bibinfo {author} {\bibfnamefont {R.~M.}\ \bibnamefont
  {Wilson}},\ }\href@noop {} {\emph {\bibinfo {title} {A course in
  combinatorics}}}\ (\bibinfo  {publisher} {Cambridge University Press},\
  \bibinfo {year} {1992})\BibitemShut {NoStop}%
\end{thebibliography}
\end{document}